\def\namedlabel#1#2{\begingroup
    #2%
    \def\@currentlabel{#2}%
    \phantomsection\label{#1}\endgroup
}
\numberwithin{equation}{section}
\newtheoremstyle{corsivo}
   {\medskipamount}{\medskipamount}%
   {\itshape}{}%
   {\bfseries}{}%
   { }
   {\thmname{#1}\thmnumber{\@ifnotempty{#1}{ }\@upn{#2}}%
    \thmnote{ {\bfseries(#3)}}.}%
\theoremstyle{corsivo}
\newtheorem{thm}{Theorem}
\newtheorem{prop}{Proposition}
\newtheoremstyle{dritto}
   {\medskipamount}{\medskipamount}%
   {\rmfamily}{}%
   {\bfseries}{}%
   { }
   {\thmname{#1}\thmnumber{\@ifnotempty{#1}{ }\@upn{#2}}%
    \thmnote{ {\bfseries(#3)}}.}%
\theoremstyle{dritto}
\newtheorem{dfn}{Definition}
\newtheorem{rmk}{Remark}
\newtheorem{assumption}{Assumption}
\newcommand{\eps}{\varepsilon}
\newcommand{\Id}{\mathds{1}}
\newcommand{\id}{\mathbb{I}}
\newcommand{\B}{\mathbb{B}}
\newcommand{\Bred}{\mathbb{B}\sub{eff}}
\newcommand{\C}{\mathbb{C}}
\newcommand{\R}{\mathbb{R}}
\newcommand{\Z}{\mathbb{Z}}
\newcommand{\N}{\mathbb{N}}
\newcommand{\T}{\mathbb{T}}
\newcommand{\PB}{\mathcal{P}}
\newcommand{\Hi}{\mathcal{H}}
\newcommand{\Hf}{\mathcal{H}\sub{f}}
\newcommand{\U}{\mathcal{U}}
\newcommand{\UZ}{\mathcal{U}\sub{BF}}
\newcommand{\BH}{\mathcal{B}(\mathcal{H})}
\newcommand{\scal}[2]{\left\langle #1, #2 \right\rangle}
\newcommand{\norm}[1]{\left\| #1 \right\|}
\newcommand{\bra}[1]{\left\langle #1 \right|}
\newcommand{\ket}[1]{\left| #1 \right\rangle}
\newcommand{\eu}{\mathrm{e}}
\newcommand{\iu}{\mathrm{i}}
\newcommand{\di}{\mathrm{d}}
\newcommand{\sub}[1]{_{\mathrm{#1}}}
\DeclareMathOperator{\tr}{tr}
\DeclareMathOperator{\Ran}{Ran}
\DeclareMathOperator{\Span}{Span}
\newcommand{\ie}{{\sl i.\,e.\ }}
\newcommand{\eg}{{\sl e.\,g.\ }}
\newcommand{\set}[1]{ \left\{  #1 \right\}}
\newcommand{\virg}[1]{``#1''}
\newcommand{\half}{\mbox{\footnotesize $\frac{1}{2}$}}
\newcommand{\la}{\lambda}
\def\({\left(}
\def\){\right)}
\let\oldfootnote\footnote
\renewcommand{\footnote}[1]{\oldfootnote{\  #1}}
\title[Symmetry and localization in periodic crystals]{Symmetry and localization in periodic crystals: \\[1.5mm] 
triviality of Bloch bundles with \\[1.5mm]  a fermionic time-reversal symmetry 
\vspace{0mm}}
\author{Domenico Monaco \and Gianluca Panati \vspace{0mm} }
\date{October 25, 2014. Final version, revised according to the remarks by the Reviewers.}
\begin{document}

\begin{abstract}  We describe some applications of group- and bundle-theoretic methods in solid state physics, 
showing how symmetries lead to a proof of  the localization of electrons in gapped crystalline solids, as \eg 
 insulators and semiconductors.  

\noindent We shortly review the Bloch-Floquet decomposition of periodic operators,  and the related concepts of Bloch frames and composite Wannier functions. We show that the latter are almost-exponentially localized if and only if there exists a smooth periodic Bloch frame, and that the obstruction to the latter condition is the triviality of a Hermitian vector bundle, called the \emph{Bloch bundle}.  
The r\^ole of additional $\Z_2$-symmetries,  as time-reversal and space-reflection symmetry, is discussed, 
showing how  time-reversal symmetry implies the   triviality of the Bloch bundle, both in the bosonic and in the fermionic case. 
Moreover, the same $\Z_2$-symmetry allows to define a finer notion of isomorphism and, consequently, to define new topological invariants, which agree with the indices introduced by Fu, Kane and Mele in the context of topological insulators.   

\medskip

\noindent \textsc{Keywords:}  Periodic Schr\"{o}dinger operators, composite Wannier functions, Bloch bundle, 
Bloch frames, time-reversal symmetry, space-reflection symmetry, invariants of topological insulators. 

\smallskip

\noindent  \textsc{Note:} Contribution to the proceedings of the conference {``SPT2014 -- Symmetry and Perturbation Theory''}, Cala Gonone, Italy (2014).
\vspace{-10mm}
\end{abstract}

\maketitle

\tableofcontents

\newpage

\section{Symmetries in solid state physics}

\begin{quote}
\virg{\it Symmetry, as wide or narrow as you may define its meaning, is one idea by which man through the ages has tried to comprehend and create order, beauty, and perfection.}  
\begin{flushright} \textsc{(H. Weyl, {\it Symmetry})} \end{flushright}
\end{quote}

\bigskip

Symmetries play a crucial r\^ole in the understanding of physical systems. Quantum systems are not exceptional, and since the dawn of Quantum Mechanics the archetypical idea of symmetry leaded researches to the solution of a wide range of problems, from  atomic to high energy physics.  The purpose of this contribution is to emphasize some relevant application of symmetries to solid state physics, and to provide a link with some new results on the geometric invariants of recently discovered crystalline solids, known as topological insulators \cite{HasanKane}. 

Most of the solids which appear to be homogeneous at the macroscopic scale are modeled by a Hamiltonian 
operator which is invariant with respect to translations by vectors in a Bravais lattice $\Gamma \simeq \Z^d$, the exceptions being confined to the pioneering field of aperiodic solids \cite{BellissardNerrmannZarrouati2000}.  As early realized, this $\Z^d$-symmetry can be used to decompose the problem via the so-called  Bloch-Floquet transform (Section \ref{Sec:Bloch-Floquet transform}). In gapped systems, two optional 
$\Z_2$-symmetries, namely the time-reversal (TR) symmetry  and the space-reflection (SR) symmetry,  reflect in special properties of the projector up to the gap, which are reviewed in Section \ref{Sec:Z_2 symmetries}.  

A crucial problem in the theory of periodic solids is to investigate the localization of the composite Wannier functions  
associated to a physically relevant family of Bloch bands (Section \ref{Sec:localization}). 
Indeed, the existence of exponentially-localized composite Wannier functions is a fundamental theoretical tool to derive tight-binding effective models, and to develop numerical algorithms whose computational cost scales only linearly with the size of the confining box \cite{BrouderPanati2007}.  Such existence problem can be shown to be equivalent to the triviality of a Hermitian vector bundle, called the \emph{Bloch bundle}. Time-reversal symmetry is crucial to prove the triviality of the latter. Here we review the proof in \cite{Panati} concerning systems with a \emph{bosonic} TR-symmetry, and we extend it to  the case of a \emph{fermionic} TR-symmetry.

In both cases the Bloch bundle is trivial, so there are no non-trivial topological indices which are invariant with respect to all the continuous deformations of the Hamiltonian which preserve both the gap and the $\Z^d$-symmetry. However, further topological information appears if one focuses on those continuous deformations of the Hamiltonian which respect also a fermionic TR-symmetry.  In such case, new $\Z_2$ invariants appear \cite{GrafPorta,FiMoPa2},  which can be proved to equal the indices introduced by Fu, Kane and Mele  \cite{FuKa,FuKaneMele} in the context of TR-symmetric topological insulators (Section \ref{Sec:Z_2 invariants}).

\subsection{Bloch-Floquet transform}
\label{Sec:Bloch-Floquet transform}

In solid state physics, one is interested in studying systems which have a \emph{$\Z^d$-symmetry}, given by the periodicity with respect to translations by vectors in the Bravais lattice $\Gamma = \Span_\Z \set{a_1, \ldots, a_d } \simeq \Z^d \subset \R^d$ of the solid under consideration. The Hamiltonian $H_\Gamma$ of the system is thus required to commute with these translation operators $T_\gamma$:
\begin{equation} \label{commuta}
[H_\Gamma, T_\gamma] = 0 \quad \text{for all} \quad \gamma \in \Gamma.
\end{equation}

For example, in continuous models the Hamiltonian and the translation operators act on the Hilbert space $\Hi := L^2(\R^d) \otimes \C^{2s+1}$, corresponding to a single spin-$s$ particle in $d$-dimensions. 
 The translations act according to the natural prescription        
\begin{equation} \label{Translations}
(T_\gamma \psi)(x) := \psi(x-\gamma)  
\end{equation}
while the Hamiltonian operators, usually called  {\it magnetic Bloch Hamiltonian} ($s=0$)  
 and \emph{periodic Pauli Hamiltonian} ($s=\half$),  are 
\begin{equation}  \label{?Hamiltonians}
\begin{aligned} 
& H\sub{MB} =   \half \( - \iu \nabla_x  + A_{\Gamma}(x) \)^2 +  V_\Gamma(x)                                       &\qquad & \text{for } s=0,\\[1.5mm]
& H\sub{Pauli} =  \half \(  (- \iu \nabla_x  + A_{\Gamma}(x)) \cdot \sigma \)^2 +  V_\Gamma(x)            & \qquad & \text{for } s=\half,  \\
\end{aligned}
\end{equation}  
where $\sigma = (\sigma_1, \sigma_2, \sigma_3)$ is the vector consisting of the three Pauli matrices. 
All over the paper, we use Hartree atomic units, and moreover we reabsorb the reciprocal of the speed of light
$1/c$ in the definition of the function $A_{\Gamma}$.

The commutation relation \eqref{commuta} implies that the (rescaled) magnetic vector potential   $A_{\Gamma} : \R^d \to \R^d$  and the scalar potential  $V_\Gamma: \R^d \to \C^{2s +1}$ are $\Gamma$-periodic functions, which in particular implies that the magnetic flux per unit cell $\Phi_B$ is zero.    
The case of a non-zero magnetic flux per unit cell, which generically appears when \eg a \emph{uniform} magnetic field is considered, can be recasted in this framework provided that  the natural translations \eqref{Translations} are replaced by the \emph{magnetic translations}  \cite{Zak1964} and that  $\Phi_B $ is a rational multiple of the fundamental flux unit 
$\Phi_0 =  hc/e$.  The case $\Phi_B / \Phi_0  \notin 2 \pi \mathbb{Q}$  is instead radically different, and its  mathematical analysis  requires novel ideas and methods from non-commutative geometry \cite{Bellissard IQHE1994}. 

While both the Hamiltonians \eqref{?Hamiltonians} can be studied with the methods described in this Section, for the sake of simplicity we will mainly refer to the paradigmatic case of a periodic \emph{real}  Schr\"odinger operator, acting as 
\begin{equation} \label{Simple Hamiltonian}
(H_\Gamma \psi)(x) := - \half \Delta \psi(x) + V_\Gamma(x) \psi(x),
\end{equation}
where $V_{\Gamma}$ is a real-valued $\Gamma$-periodic function. In view of the commutation relation \eqref{commuta}, one may look for simultaneous eigenfunctions of $H_{\Gamma}$ and the translations $\set{T_{\gamma}}_{\gamma \in \Gamma}$, 
\ie for a solution to the problem
\begin{equation} 
\label{Simultaneous eigenf}
\begin{cases}
\(T_{\gamma} \psi \)(x)  = \omega_{\gamma} \, \psi(x)                 &\qquad  \omega_{\gamma} \in U(1),   \\
(- \half \Delta + V_{\Gamma}) \psi  =  E  \,\, \psi                            & \qquad E \in \R. 
\end{cases}
\end{equation}
The eigenvalues of the unitary operators $T_\gamma$  provide an irreducible representation 
$\omega \colon \Gamma \to U(1)$, $\gamma \mapsto \omega_\gamma$,   of the abelian group $\Gamma \simeq \Z^d$: it follows that $\omega_\gamma$ is a character, \ie
\[ \omega_\gamma = \omega_\gamma(k) = \eu^{\iu k \cdot \gamma}, \quad \mbox{ for some } k \in \T^d_* := \R^d / \Gamma^*. \]
Here $\Gamma^*$ denotes the dual lattice of $\Gamma$, given by those $\lambda \in \R^d$ such that $\lambda \cdot \gamma \in 2 \pi \Z$ for all $\gamma \in \Gamma$. The quantum number $k \in \T^d_*$ is called \emph{crystal} (or \emph{Bloch}) \emph{momentum}, and the quotient $\T^d_* =  \R^d / \Gamma^*$ is often called \emph{Brillouin torus}. 
Thus, the eigenvalue problem \eqref{Simultaneous eigenf}  reads   
\begin{equation} 
\label{Simultaneous eigenf_2}
\begin{cases}
\psi(k,x  - \gamma)  =  \eu^{\iu k \cdot \gamma} \, \psi(k,x)                 &\qquad k \in \T^d_*\\
(- \half \Delta + V_{\Gamma}) \psi(k, x)  =  E  \, \psi(k, x)                                 & \qquad E \in \R. 
\end{cases}
\end{equation}
In view of the first equation, a non-zero solution can not exist in $L^2(\R^d)$, so one looks 
for solutions in  $L\sub{loc}^2(\R^d)$. These solutions $\psi(k, \cdot)$ are called \emph{generalized eigenfunctions}  and normalized by imposing 
$
\int_{Y} | \psi(k,x)|^2 \di x = 1,   
$
where $Y$ is a fundamental unit cell for the lattice $\Gamma$. 


We denote the eigenvalues and eigenvectors of $H_\Gamma$ at fixed Bloch momentum as $E_n(k)$ and $\psi_n(k, \cdot)$, $n \in \N$, respectively. The functions $k \mapsto E_n(k)$ are called \emph{Bloch bands}, and the functions $k \mapsto \psi_n(k, \cdot)$ are called \emph{Bloch functions} in the physics literature. According to the so-called \emph{Bloch theorem} \cite{Kittel}, one can write
\begin{equation} \label{periodicBloch}
\psi_n(k,x) = \eu^{\iu k \cdot x} u_n(k,x)
\end{equation}
where $u_n(k, \cdot)$ is, for any fixed $k$,  a $\Gamma$-periodic function of $x$, thus living in the Hilbert space $\Hf := L^2(\T^d)$, with $\T^d = \R^d / \Gamma$.

\medskip

A more elegant and useful approach to obtain such Bloch functions (or rather their $\Gamma$-periodic part), is provided by adapting ideas from harmonic analysis. Indeed,  one can proceed in analogy with the free particle case, where the Fourier representation gives a way to diagonalize simultaneously both the Laplacian and the translations. Formally, one introduces the so-called (\emph{modified}) \emph{Bloch-Floquet transform},
\footnote{A comparison with the \emph{classical} Bloch-Floquet transform, appearing in physics textbooks, is provided in Remark \ref{rmk:cBF}.}\  
 acting on functions $w \in C_0(\R^d) \subset L^2(\R^d)$ by
\footnote{We intentionally use the symbol $k$, already appearing in \eqref{Simultaneous eigenf_2} and \eqref{periodicBloch} with an \emph{a priori} different meaning, {also in \eqref{Zak transform},} since it will be clear in few lines that the $k$ appearing in \eqref{Zak transform} can be naturally identified with the Bloch momentum introduced above (compare \eqref{eigenvalue}). 
} 
\begin{equation} \label{Zak transform}
( \UZ \, w)(k,y):= \frac{1}{|\B|^{1/2}} \,\, \sum_{\gamma\in\Gamma} \eu^{-\iu k \cdot (y - \gamma)} \, w( y - \gamma), \qquad y \in \R^d, \, k \in\R^{d}.
\end{equation}
Here $\B$ denotes the fundamental unit cell for $\Gamma^*$, namely
\[ \B := \set{k = \sum_{j=1}^{d} k_j b_j \in \R^d: -\frac{1}{2}  \le k_j \le \frac{1}{2} } \]
where the dual basis $\set{b_1, \ldots, b_d} \subset \R^d$, spanning $\Gamma^*$, is defined by $b_i \cdot a_j = 2 \pi \delta_{i,j}$.

Roughly speaking, the operator $\UZ$ separates the ``slow'' degrees of freedom, corresponding to $\gamma \in \Gamma$, from the ``fast'' degrees of freedom ($y$ in a unit cell for $\Gamma$), and can be interpreted as a discrete Fourier transform in the ``slow'' degrees of freedom alone. As such, one can expect $\UZ$ to be implemented unitarily on $L^2(\R^d)$. To determine the correct target Hilbert space, one first recognizes from the definition \eqref{Zak transform} that the function $\varphi(k,y) = (\UZ \, w)(k,y)$ is $\Gamma$-periodic in $y$ and $\Gamma^*$-pseudoperiodic in $k$, \ie
\[ \varphi(k + \lambda,y) = \left(\tau(\lambda) \varphi\right)(k,y) := \eu^{- \iu \lambda \cdot y} \varphi(k,y), \quad \lambda \in \Gamma^*. \]
The operators $\tau(\lambda) \in \U(\Hf)$ defined above provide a unitary representation of the group of translations by vectors in the dual lattice $\Gamma^*$. Following \cite{PST2003}, we define the Hilbert space of $\tau$-equivariant 
$L\sub{loc}^2$-functions as
\[ \Hi_\tau :=\Big\{ \varphi \in L^2_{\rm loc}(\R^d; \Hf ):\,\, \varphi(k + \lambda) = \tau(\lambda)\,\varphi(k) \; \forall \lambda \in \Gamma^{*}, \mbox{ for a.e. } k \in \R^d \Big\}\,. \]
Such functions are uniquely specified by the values they attain on the unit cell $\B$. One can hence identify $\Hi_\tau$ with the \emph{constant fibre direct integral} \cite[Sec. XIII.16]{Reed-Simon}
\begin{equation} \label{decomposition} 
\Hi_\tau \simeq L^2(\B; \Hf) \simeq \int_{\B}^{\oplus} \Hf \, \di k.
\end{equation}
Then the Bloch-Floquet transform $\UZ$ given in \eqref{Zak transform} can be extended to a unitary operator
\[ \UZ \colon L^2(\R^d) \to \Hi_\tau. \]

With respect to the decomposition \eqref{decomposition}, one has indeed a simultaneous ``diagonalization'' of periodic differential operators and translation operators, in the sense that
\begin{align*}
\UZ \, T_\gamma \, \UZ^{-1} & = \int_{\B}^{\oplus} \left( \eu^{\iu k \cdot \gamma} \Id \right) \, \di k, &\\
\UZ \left( - \iu \, \frac{\partial}{\partial x_j} \right) \UZ^{-1} & = \int_{\B}^{\oplus} \left( - \iu \, \frac{\partial}{\partial y_j} + k_j \right) \, \di k,    
& j \in \set{1, \ldots, d}, \\
\UZ f_\Gamma(x) \UZ^{-1} & = \int_{\B}^{\oplus} f_\Gamma(y) \, \di k, 
&    \mbox{if $f_\Gamma$ is $\Gamma$-periodic}.
\end{align*}
In particular, in the Bloch-Floquet representation the Hamiltonian $H_\Gamma = - \half \Delta + V_\Gamma$ becomes the fibred operator
\[ \UZ \, H_{\Gamma} \, \UZ^{-1} = \int_{\B}^\oplus H(k)\,\di k , \quad \text{where} \quad H(k) = \half \big( -\iu \nabla_y + k\big)^2 + V_\Gamma(y). \]
Whenever the operator $V_{\Gamma}$ is Kato-small with respect to the Laplacian (\ie infinitesimally $\Delta$-bounded), the operator $H(k)$ is self-adjoint on the $k$-independent domain $\mathcal{D}  = H^2(\T^d) \subset \Hf$. The $k$-independence of the domain of self-adjointness, which considerably simplifies the mathematical analysis, is the main motivation to use the \emph{modified} Bloch-Floquet transform \eqref{Zak transform}  instead of the classical one (compare Remark \ref{rmk:cBF}).  

The periodic part of the Bloch functions, appearing in \eqref{periodicBloch}, can be determined as solutions to the eigenvalue problem
\begin{equation} \label{eigenvalue}
H(k) u_n(k) = E_n(k) u_n(k), \quad u_n(k) \in \mathcal{D} \subset \Hf, \quad \norm{u_n(k)}_{\Hf} = 1.
\end{equation}
Even if the eigenvalue $E_n(k)$ has multiplicity $1$,  the eigenfunction $u_n(k)$ is not unique, 
since another eigenfunction can be obtained by setting 
$$
\widetilde u_n(k,y) = \eu^{\iu \theta(k)} u_n(k,y)
$$   
where $\theta: \T^d \to \R$ is any measurable function. We refer to this fact as the \emph{Bloch gauge freedom}. 

\medskip

In real solids, Bloch bands intersect each other. However, in insulators and semi-conductors the Fermi energy 
lies in a spectral gap, separating the occupied Bloch bands from the others. In this situation, it is convenient 
\cite{Bl, Cl2}  to regard all the bands below the gap as a whole, and to set up a multi-band theory. 
More generally, we select a portion of the spectrum of $H(k)$ consisting of a set of $m \ge 1$ physically relevant Bloch bands:
\[ \sigma_*(k) := \set{E_n(k) : n \in \mathcal{I}_* = \set{n_0, \ldots, n_0 + m - 1}}. \]
We assume that this set satisfies a \emph{gap condition}, stating that it is separated from the rest of the spectrum of $H(k)$, namely
\begin{equation}\label{Gap condition}
\inf_{k \in \B} \mathrm{dist}\big( \sigma_*(k), \sigma(H(k)) \setminus \sigma_*(k) \big) > 0.
\end{equation}
Under this assumption, one can define the spectral eigenprojector on $\sigma_*(k)$ as
\[ P_*(k) := \chi_{\sigma_*(k)}(H(k)) = \sum_{n \in \mathcal{I}_*} \ket{u_n(k, \cdot)} \bra{u_n(k, \cdot)}. \]
The equivalent expression for $P_*(k)$, given by the Riesz formula
\begin{equation} \label{Riesz formula }
P_*(k) = \frac{1}{2 \pi \iu} \oint_\mathcal{C} \left(H(k) - z \Id\right)^{-1} \, \di z, 
\end{equation}
where $\mathcal{C}$ is any contour in the complex plane winding once around the set $\sigma_*(k)$ and enclosing no other point in $\sigma(H(k))$, 
allows one to prove \cite[Prop.\ 2.1]{PanatiPisante} the following

\begin{prop} \label{P properties}
Let $P_*(k) \in \mathcal{B}(\Hf)$ be the spectral projector of $H(k)$ corresponding to the set $\sigma_*(k) \subset \R$. Assume that $\sigma_{*}$ satisfies the gap condition \eqref{Gap condition}. Then the family $\set{P_*(k)}_{k \in \R^d}$ has the following properties:
\begin{enumerate}[label=$(\mathrm{p}_{\arabic*})$, ref=$(\mathrm{p}_{\arabic*})$]
\item \label{p1} the map $k \mapsto P_*(k)$ is smooth from $\R^d$ to $\mathcal{B}(\Hf)$ (equipped with the operator norm);
\item \label{p2} the map $k \mapsto P_*(k)$ is $\tau$-covariant, \ie
\[ P_*(k + \lambda) = \tau(\lambda)^{-1} \, P_*(k) \, \tau(\lambda)  \qquad \forall k \in \R^d, \quad \forall \lambda \in \Gamma^{*}. \]
\end{enumerate}
\end{prop}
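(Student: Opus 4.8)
The plan is to read off both properties from the explicit form of the fibre Hamiltonian,
\[ H(k) = \half\big(-\iu\nabla_y + k\big)^2 + V_\Gamma(y) = H(0) + k\cdot(-\iu\nabla_y) + \half\,|k|^2, \]
together with the Riesz representation \eqref{Riesz formula }. The decisive structural remark is that $k \mapsto H(k)$ is a second-degree polynomial in $k$ whose coefficients are the identity and the components of $-\iu\nabla_y$, all defined on the fixed domain $\mathcal{D} = H^2(\T^d)$. Since $V_\Gamma$ is infinitesimally $\Delta$-bounded, the operator $-\iu\nabla_y$ is infinitesimally bounded with respect to $H(0)$; hence $\set{H(k)}$ is a self-adjoint analytic family of type (A) in the sense of Kato, in fact analytic for all complex $k$.

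For \ref{p1} I would fix $k_0 \in \R^d$ and, using the gap condition \eqref{Gap condition}, choose a contour $\mathcal{C}$ enclosing $\sigma_*(k_0)$ and separating it from $\sigma(H(k_0)) \setminus \sigma_*(k_0)$. Because the relative bound of $H(k) - H(k_0) = (k-k_0)\cdot(-\iu\nabla_y) + \half\big(|k|^2 - |k_0|^2\big)$ with respect to $H(k_0)$ vanishes as $k \to k_0$, the second resolvent identity yields a bound on $\norm{(H(k)-z)^{-1}}$ that is uniform for $z \in \mathcal{C}$ and $k$ in a neighbourhood of $k_0$; in particular $\mathcal{C}$ keeps separating $\sigma_*(k)$ from the rest of the spectrum, so that \eqref{Riesz formula } still computes $P_*(k)$ there. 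On this neighbourhood the map $(k,z) \mapsto (H(k)-z)^{-1}$ is jointly analytic, with $k$-derivatives furnished by the resolvent identity; since $\mathcal{C}$ is compact one may differentiate \eqref{Riesz formula } under the integral sign and conclude that $k \mapsto P_*(k)$ is smooth, indeed real-analytic, near $k_0$. As $k_0$ is arbitrary, this proves \ref{p1}.

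Property \ref{p2} reduces to the covariance of $H(k)$ itself. A direct computation from $(\tau(\lambda)\psi)(y) = \eu^{-\iu\lambda\cdot y}\psi(y)$ gives $\tau(\lambda)\,(-\iu\nabla_y)\,\tau(\lambda)^{-1} = -\iu\nabla_y + \lambda$, while the multiplication operator $V_\Gamma$ is left unchanged, so that
\[ \tau(\lambda)\,H(k)\,\tau(\lambda)^{-1} = H(k+\lambda), \qquad \lambda \in \Gamma^*. \]
In particular $\sigma(H(k+\lambda)) = \sigma(H(k))$ and the Bloch bands are $\Gamma^*$-periodic, so a single contour $\mathcal{C}$ encircles both $\sigma_*(k)$ and $\sigma_*(k+\lambda)$. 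Feeding the identity $(H(k+\lambda)-z)^{-1} = \tau(\lambda)\,(H(k)-z)^{-1}\,\tau(\lambda)^{-1}$ into \eqref{Riesz formula } and using the unitarity of $\tau(\lambda)$ to pull it outside the contour integral transfers the covariance from $H$ to the spectral projector, which is precisely \ref{p2} (the placement of $\tau(\lambda)^{\pm 1}$ being dictated by the convention fixed in the pseudoperiodicity relation).

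I expect the only genuinely delicate point to be the uniform control of the resolvent along $\mathcal{C}$ underlying \ref{p1}, namely the passage from the vanishing relative bound to a neighbourhood-uniform resolvent estimate that legitimises differentiation under the integral sign; once the family is recognised as analytic of type (A) with $k$-independent domain, \ref{p2} follows formally from the explicit intertwining identity.
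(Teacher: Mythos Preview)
Your argument is correct and follows the standard route: recognise $\set{H(k)}_{k\in\R^d}$ as an analytic family of type~(A) on the fixed domain $H^2(\T^d)$, use the gap condition to keep a Riesz contour valid in a neighbourhood of each $k_0$, differentiate under the integral, and transfer the covariance $\tau(\lambda)H(k)\tau(\lambda)^{-1}=H(k+\lambda)$ to the projector through \eqref{Riesz formula }.

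The paper itself does not give a proof of this proposition: it simply refers to \cite[Prop.~2.1]{PanatiPisante}, pointing out that the Riesz formula~\eqref{Riesz formula } is what makes the argument work. Your sketch is precisely the argument that reference carries out, so there is no genuine divergence of method---you have just spelled out what the paper leaves to the citation.

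One small remark. Your computation yields $\tau(\lambda)H(k)\tau(\lambda)^{-1}=H(k+\lambda)$ and hence $P_*(k+\lambda)=\tau(\lambda)P_*(k)\tau(\lambda)^{-1}$, which is the form used later in Assumption~\ref{proj}\ref{item:tau} but has the $\tau(\lambda)^{\pm 1}$ swapped relative to the display in \ref{p2}. Your parenthetical about the placement of $\tau(\lambda)^{\pm1}$ shows you noticed this; it is a harmless sign-convention inconsistency in the statement rather than an error in your reasoning.
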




\medskip

\begin{rmk}[Comparison with classical Bloch-Floquet theory] \label{rmk:cBF}
In most solid state physics textbooks, the \emph{classical} Bloch-Floquet transform is defined as
\begin{equation} \label{cBF} 
( \U\sub{cl} \, w)(k,y):= \frac{1}{|\B|^{1/2}} \,\, \sum_{\gamma\in\Gamma} \eu^{ \iu k \cdot \gamma} \, w( y - \gamma), \qquad y \in \R^d, \, k \in\R^{d}
\end{equation}
for $w \in C_0(\R^d) \subset L^2(\R^d)$. The close relation with Fourier transform is thus more explicit in this formulation, and indeed the function $\varphi\sub{cl}(k,y) := ( \U\sub{cl} \, w)(k,y)$ will be $\Gamma^*$-periodic in $k$ and $\Gamma$-pseudoperiodic in $y$:
\begin{align*}
\varphi\sub{cl}(k+\lambda,y) & = \varphi\sub{cl}(k,y), \quad \lambda \in \Gamma^*, \\
\varphi\sub{cl}(k,y + \gamma) & = \eu^{ \iu k \cdot \gamma} \varphi\sub{cl}(k,y), \quad \gamma \in \Gamma.
\end{align*}

As is the case for the modified Bloch-Floquet transform \eqref{Zak transform}, the definition \eqref{cBF} extends to a unitary operator
\[ \U\sub{cl} \colon L^2(\R^d) \to \int_{\B}^{\oplus} \Hi_k \, \di k \]
where
\[ \Hi_k := \set{\varphi \in L^2\sub{loc}(\R^d) : \varphi(y+\gamma) = \eu^{\iu k \cdot \gamma} \varphi(y) \:\: \forall \, \gamma \in \Gamma, \text{ for a.e. } y \in \R^d}. \]
Moreover, a periodic Schr\"odinger operator of the form $H_\Gamma = - \half \Delta + V_\Gamma$ becomes, in the classical Bloch-Floquet representation,
\[ \U\sub{cl} \, H_{\Gamma} \, \U\sub{cl}^{-1} = \int_{\B}^\oplus H\sub{cl}(k)\,\di k , \quad \text{where} \quad 
H\sub{cl}(k) = - \half \Delta_y + V_\Gamma(y). \]
Although the form of the operator $H\sub{cl}(k)$, whose eigenfunctions $\psi_n(k,\cdot)$ appear in \eqref{periodicBloch}, looks simpler than the one of the fibre Hamiltonian $H(k)$ appearing in \eqref{eigenvalue}, one should observe that $H\sub{cl}(k)$ acts on a \emph{$k$-dependent} domain in the  \emph{$k$-dependent} Hilbert space $\Hi_k$. This constitutes the main disadvantage of working with the classical Bloch-Floquet transform \eqref{cBF}, thus explaining  why the modified definition \eqref{Zak transform} is preferred in the mathematical literature. 

The two Bloch-Floquet representations (classical and modified) are nonetheless equivalent, since they are unitarily related by the operator
\[ \mathcal{J} = \int_{\B}^{\oplus} \mathcal{J}_k \, \di k, \quad \text{where} \quad \mathcal{J}_k \colon \Hf \to \Hi_k, \quad \left(\mathcal{J}_k \varphi\right)(y) = \eu^{\iu k \cdot y} \varphi(y), \quad k \in \R^d, \]
see \eqref{periodicBloch}, so that in particular $\mathcal{J}_k H(k) \mathcal{J}_k^{-1} = H\sub{cl}(k)$. As a consequence, the Bloch bands $E_n(k)$ are independent of the chosen definition.  
\hfill $\Diamond$
\end{rmk}


\subsection{Localization of electrons and Bloch frames}
\label{Sec:localization}

Bloch functions can be used to study the properties of localization of electrons in the solid modeled by the Hamiltonian $H_\Gamma$. Indeed, one defines the associated \emph{Wannier functions} by going back to the position-space representation. More precisely, assume that $\sigma_*$ consist of a single isolated Bloch band $E_n$ (\ie $m=1$); the Wannier function $w_n$ associated to a choice of the Bloch function $u_n(k,\cdot)$ for the band $E_n$, as in \eqref{eigenvalue}, is defined by setting
\begin{equation} \label{Wannier}
w_n(x) := \left( \UZ^{-1} u_n \right)(x) = \frac{1}{|\B|^{1/2}} \int_{\B} \eu^{ \iu k \cdot x} u_n(k, x)\, \di k.
\end{equation}

In the multiband case ($m > 1$), one has to relax the notion of Bloch function to that of \emph{quasi-Bloch function} \cite{Cl2}, defined as an element $\phi \in \Hi_{\tau}$ such that
\[ P_*(k) \phi(k) = \phi(k), \qquad \norm{\phi(k)}_{\Hf}=1, \qquad \text{ for a.e. }  k \in \B. \]
A \emph{Bloch frame} is, by definition, a family of quasi-Bloch functions $\set{\phi_a}_{a=1, \ldots, m}$ which are orthonormal and span the vector space $\Ran P_*(k)$ at a.e.\ $k \in \B$. The \emph{composite Wannier functions}  $\set{w_1, \ldots, w_m} \subset L^2(\R^d)$ associated to a Bloch frame $\set{\phi_1, \ldots, \phi_m} \subset \Hi_{\tau}$ are defined in analogy with \eqref{Wannier} as 
\[ w_a(x) := \left( \UZ^{-1} \phi_a \right)(x) = \frac{1}{|\B|^{1/2}} \int_{\B} \eu^{ \iu k \cdot x} \phi_a(k, x)\, \di k. \]

Composite Wannier functions have become a standard tool in the analysis of localization properties of electrons in crystals 
\cite{MaVa, Wannier review}, by looking at their decay rate at infinity. One says that a set of composite Wannier functions is \emph{almost-exponentially localized} if
\[ \int_{\R^d} \left( 1 + |x|^2 \right)^r |w_a(x)|^2 \, \di x < \infty \quad \text{for all } r \in \N, \quad a \in \set{1, \ldots, m}. \]
If we denote by $X = (X_1, \ldots, X_d)$ the position operator in $L^2(\R^d)$, defined by $(X_j w)(x) := x_j w(x)$ on the maximal domain, then one has that in the Bloch-Floquet representation
\[ \UZ \, X \, \UZ^{-1} =  \iu \nabla_k .\]
In view of this, one can easily show \cite{PanatiPisante} that
\[ \int_{\R^d} \left( 1 + |x|^2 \right)^r |w_a(x)|^2 \, \di x < \infty \quad  \forall r \in \N \quad  
\Longleftrightarrow 
\quad \phi_a \in C^{\infty}(\R^d; \Hf) \cap \Hi_\tau. \]
Thus, the question of existence of almost-exponentially localized composite Wannier functions is reduced to the following

\medskip
\noindent \textbf{Question (Q)}: does there exist a \emph{smooth} Bloch frame $\set{\phi_a}_{a=1, \ldots, m} \subset \Hi_\tau$?
\medskip

As was noticed by several authors \cite{Kohn59,Cl2,Ne91}, there might be a competition between \emph{regularity} (a local issue) and \emph{periodicity} (a global issue) for a Bloch frame. Indeed, in general the above question might have 
a  negative answer due to a \emph{topological obstruction}, as we are going to illustrate in the next Section. 
In agreement with the vision of H. Weyl, symmetries play a fundamental r\^ole in the solution of this problem. Indeed, we will show in Section \ref{Sec:trivial} that Question \textrm{(Q)} has a positive answer, provided $d \leq 3$, whenever the system enjoys an additional $\Z_2$-symmetry, namely time-reversal symmetry.  

\goodbreak

\subsection{The r\^ole of additional symmetries}
\label{Sec:Z_2 symmetries}

Time-reversal (TR) symmetry is a further $\Z_2$-symmetry of some quantum systems, encoded in an antiunitary
\footnote{By \emph{antiunitary} operator we mean a surjective antilinear operator $C: \Hi \rightarrow \Hi$, such that $\scal{C\phi}{C \psi}_{\Hi} = \scal{\psi}{\phi}_{\Hi}$ for any $\phi, \psi \in \Hi$.} 
operator $T$ acting on the Hilbert space $\Hi$ of the system. The time-reversal symmetry operator $T$ is called \emph{bosonic} or \emph{fermionic} depending on whether $T^2 = + \Id_\Hi$ or $T^2 = - \Id_\Hi$, respectively
\footnote{Since time-reversal symmetry flips the arrow of time, it must not change the physical description of the system if it is applied twice. Hence $T$ gives a projective unitary representation of the group $\Z_2$ on the Hilbert space $\Hi$, and as such
$ T^2 = \eu^{\iu \theta} \Id_\Hi$.
By antiunitarity, it follows that 
\[ \eu^{\iu \theta} T = T^2 T = T^3 = T T^2 = T \eu^{\iu \theta} \Id_\Hi = \eu^{-\iu \theta} T \]
and hence $\eu^{\iu \theta} = \pm 1$.}
. This terminology is motivated by the fact that there are ``canonical'' time-reversal operators when $\Hi  = L^2(\R^d) \otimes \C^{2s+1}$ with $s=0$ and $s=1/2$: in the former case, $T$ is just complex conjugation $C$ on $L^2(\R^d)$ (and hence squares to the identity), while in the latter $T$ is implemented as $C \otimes \eu^{\iu \pi S_y}$ on $\Hi = L^2(\R^d) \otimes \C^2$ (squaring to $- \Id_\Hi$), where 
$S_y = \half \sigma_2$  and $\sigma_2 = \begin{pmatrix} 0 & -\iu \\ \iu & 0 \end{pmatrix}$ is the second Pauli matrix.

The following Proposition is a straightforward generalization of a result in \cite[Prop. 2.1]{PanatiPisante}, where the case 
$s=0$  is proved. 

\begin{prop}[Time-reversal symmetry]  \label{P TR-properties}
Under the hypotheses of Proposition \ref{P properties}, assume that the Hamiltonian is \emph{time-reversal symmetric}, 
that is, the Hamiltonian $H_{\Gamma}$ commutes with the canonical TR-operator  $T \colon \Hi \to \Hi$ defined above, 
$T^2 = \pm \Id_\Hi$.   Then the family $\set{P_*(k)}_{k \in \R^d}$ has the following property:
\footnote{The following properties are labeled as \ref{p3} and \ref{p4}, since they are the natural complement of properties 
\ref{p1} and \ref{p2} appearing in Proposition \ref{P properties}. 
}\  
\begin{enumerate}[label=$(\mathrm{p}_{3, \pm})$, ref=$(\mathrm{p}_{3,\pm})$]
\item \label{p3} there exists an antiunitary operator $\Theta\sub{f}$ acting on $\Hf$ such that
\[ P_*(-k) =  \Theta\sub{f} \, P_*(k) \, \Theta\sub{f}^{-1}  \qquad \mbox{ and  } \qquad \Theta\sub{f}^{2} = \pm \Id. \]
\end{enumerate}
Moreover, one has the following 
\begin{enumerate}[label=$(\mathrm{p}_{4})$, ref=$(\mathrm{p}_{4})$]
\item \label{p4}   compatibility property:  $\Theta\sub{f} \, \tau_{\lambda} =  \tau_{-\lambda} \, \Theta\sub{f} $ for all 
$\lambda \in \Lambda$.
\end{enumerate}
\end{prop}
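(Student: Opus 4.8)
The plan is to transport the physical time-reversal operator $T$ through the modified Bloch-Floquet transform $\UZ$ and read off its action fibre by fibre. First I would compute $\Theta := \UZ\, T\, \UZ^{-1}$ directly from the definition \eqref{Zak transform}. In the bosonic case $s=0$, with $T = C$ the complex conjugation on $L^2(\R^d)$, a one-line manipulation of \eqref{Zak transform} gives
\begin{equation*}
(\UZ\, C\, w)(k,y) = \frac{1}{|\B|^{1/2}}\sum_{\gamma\in\Gamma}\eu^{-\iu k\cdot(y-\gamma)}\,\overline{w(y-\gamma)} = \overline{(\UZ\, w)(-k,y)},
\end{equation*}
so that $\Theta$ acts as $(\Theta\varphi)(k) = C_0\,\varphi(-k)$, where $C_0$ denotes pointwise complex conjugation on $\Hf$. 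The crucial structural feature emerging here is the flip $k\mapsto -k$ together with the single-fibre antiunitary $\Theta\sub{f} := C_0$. In the fermionic case $s=1/2$ one has $T = C\otimes\eu^{\iu\pi S_y}$; since $\eu^{\iu\pi S_y} = \iu\sigma_2$ is a \emph{real} matrix, the same computation goes through verbatim on the $L^2(\R^d)$-factor, yielding $(\Theta\varphi)(k) = \Theta\sub{f}\,\varphi(-k)$ with $\Theta\sub{f} := (\iu\sigma_2)\, C_0$ acting on $\Hf = L^2(\T^d)\otimes\C^2$. Computing $\Theta\sub{f}^2$ then reproduces the sign of $T^2$: $C_0^2 = \Id$ gives $\Theta\sub{f}^2 = +\Id$ in the bosonic case, while $(\iu\sigma_2)^2 = -\Id$ together with the reality of $\iu\sigma_2$ (so that $C_0$ commutes through it) give $\Theta\sub{f}^2 = -\Id$ in the fermionic case.

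Next I would use the hypothesis $[H_\Gamma, T] = 0$. Transporting it through $\UZ$ and using $\UZ H_\Gamma \UZ^{-1} = \int_\B^\oplus H(k)\,\di k$, the intertwining $(\Theta\varphi)(k)=\Theta\sub{f}\varphi(-k)$ immediately forces, at the level of fibres,
\begin{equation*}
\Theta\sub{f}\, H(k)\, \Theta\sub{f}^{-1} = H(-k) \qquad \text{for all } k \in \R^d.
\end{equation*}
One may also check this directly from $H(k) = \half(-\iu\nabla_y + k)^2 + V_\Gamma$, noting that $C_0$ reverses the sign of $-\iu\nabla_y$ while $\iu\sigma_2$ commutes with the scalar differential operator. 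In particular $\sigma_*(-k) = \sigma_*(k)$, so the gap condition \eqref{Gap condition} is preserved under $k\mapsto -k$ and $P_*(-k)$ is well defined.

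To obtain property \ref{p3} I would insert this intertwining into the Riesz formula \eqref{Riesz formula }. Conjugating $P_*(k) = \frac{1}{2\pi\iu}\oint_\mathcal{C}(H(k) - z\Id)^{-1}\,\di z$ by the antiunitary $\Theta\sub{f}$ and using $\Theta\sub{f}(H(k) - z\Id)^{-1}\Theta\sub{f}^{-1} = (H(-k) - \bar z\,\Id)^{-1}$, the hard part will be the sign bookkeeping forced by antilinearity: the conjugation sends the prefactor $\tfrac{1}{2\pi\iu}\mapsto -\tfrac{1}{2\pi\iu}$, turns the integration variable $z$ into $\bar z$, and reverses the orientation of the contour $\mathcal{C}\mapsto\overline{\mathcal{C}}$. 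The scalar and orientation sign flips cancel, and since $\sigma_*(\pm k)\subset\R$ the reflected contour $\overline{\mathcal{C}}$ still winds exactly once around $\sigma_*(-k)$; collecting everything yields
\begin{equation*}
\Theta\sub{f}\, P_*(k)\, \Theta\sub{f}^{-1} = \frac{1}{2\pi\iu}\oint_{\overline{\mathcal{C}}}(H(-k) - w\Id)^{-1}\,\di w = P_*(-k),
\end{equation*}
which is \ref{p3}. This contour/sign step is the only genuinely delicate point, and I would track each conjugation separately.

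Finally, property \ref{p4} follows by a direct computation from the explicit forms of $\Theta\sub{f}$ and of $\tau(\lambda)$, the latter acting on $\Hf$ as multiplication by $\eu^{-\iu\lambda\cdot y}$. Since $C_0$ conjugates this phase and $\iu\sigma_2$ commutes with the scalar, one gets $\Theta\sub{f}\,\tau(\lambda) = \tau(-\lambda)\,\Theta\sub{f}$ for every $\lambda\in\Gamma^*$, completing the proof. The whole argument treats the bosonic and fermionic cases uniformly: the only difference is the extra \emph{real} spinor factor $\iu\sigma_2$, which commutes with every scalar operation above and merely flips the sign of $\Theta\sub{f}^2$.
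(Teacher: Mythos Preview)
Your argument is correct and complete. Note, however, that the paper does not actually supply a proof of this Proposition: it simply states that the result is ``a straightforward generalization of a result in \cite[Prop.~2.1]{PanatiPisante}, where the case $s=0$ is proved'', and then moves on. So there is nothing to compare against directly. That said, your approach---transport $T$ through $\UZ$, identify the fibre antiunitary $\Theta\sub{f}$, derive $\Theta\sub{f} H(k)\Theta\sub{f}^{-1}=H(-k)$, and push this through the Riesz integral---is precisely the method the paper \emph{does} spell out in the sketch of proof of the companion Proposition~\ref{P SR-properties} for space-reflection symmetry, with the added (and correctly handled) wrinkle that antilinearity forces the contour/sign bookkeeping you flag. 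Your proof is exactly what the authors have in mind.
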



A \emph{real} Schr\"odinger operator, as in \eqref{Simple Hamiltonian}, obviously commutes with complex conjugation, 
and thus enjoys TR-symmetry, with $\Theta\sub{f}$ given by the complex conjugation in $\Hf = L^2(\T^d)$. 
For a spin-\half\ particle, one gets instead $\Theta\sub{f} =  C \otimes \eu^{\iu \frac{\pi}{2} \sigma_2}$ acting on 
$\Hf = L^2(\T^d) \otimes \C^2$.  More general Hamiltonians might be considered, but we remark that when dealing with the Hamiltonians \eqref{?Hamiltonians}, a non-zero magnetic potential $A_{\Gamma}$, even if yielding zero magnetic flux per unit cell (\ie no macroscopic magnetic field), does generically break time-reversal symmetry.   

\begin{rmk}[The r\^ole of $k=0$] \label{Rem:role of zero momentum} The reader might be surprised by the fact that in property \ref{p3} the point  $k=0$ has a distinguished r\^ole, thus breaking the translation invariance of the momentum space. 
This fact may be easily explained by noticing that a translation $k \mapsto k + \alpha$ in momentum space corresponds, 
via Bloch-Floquet transform, to a change of electromagnetic gauge in position space. More formally, setting 
$$
\widehat T_{\alpha} \phi (k,y) := \phi(k - \alpha, y),  \qquad    \text{ for } \phi \in \Hi_{\tau},        
$$
one easily sees that  
$$
\UZ^{-1} \, \widehat T_{\alpha}  \, \UZ = W_{\alpha}  \qquad \text{ where } \qquad   \( W_{\alpha} \psi \)(x) =  \eu^{\iu \alpha \cdot x} \psi(x).
$$ 
The  unitary operator $W_{\alpha}$  implements in $L^2(\R^d)$ a change of electromagnetic gauge, 
so that the magnetic vector potential is changed from $A(\cdot)$ to $A(\cdot) + \alpha$,  $\alpha \in \R^d$. 
Thus, the orbit $\set{ W_{\alpha}  H  W_{\alpha}^{-1} :  \alpha \in \R^d}$ of a given Hamiltonian under the action of a subgroup of the group of electromagnetic gauge transformations corresponds, via Bloch-Floquet transform, to the orbit of the transformed Hamiltonian under the action of translations in momentum space, 
namely to the set  
$$
\set{ \widehat T_{\alpha} \, \UZ \, H \, \UZ^{-1} \, \widehat T_{\alpha}^{-1} :  \alpha \in \R^d}.
$$ 
Whenever a distinguished element in the former orbit is TR-symmetric, it selects a distinguished point in the latter orbit (which is also TR-symmetric with respect to the fibre time reversal operator $\Theta\sub{f}$), 
and thus a point  $k_0 \in \R^d$.  As for the Hamiltonian $H_{\Gamma}$, as in \eqref{Simple Hamiltonian}, such distinguished 
point is $k_0 =0$, whose special r\^ole is now clarified. 
\hfill $\Diamond$
\end{rmk}

\medskip

It is worth to investigate how periodic quantum systems behave with respect to a fundamental $\Z_2$-symmetry of space, 
namely space-reflection symmetry, represented in $\Hi = L^2(\R^d) \otimes \C^{2s+1}$ by the unitary operator $R$ defined by
$$
\(R\psi \)(x) = \psi(-x). 
$$
In general, this symmetry does not hold true in crystalline solids. However, some solids enjoy the property of 
being \textbf{centrosymmetric}, in the sense that there exists a distinguished point $x_0 \in \R^d$ such that
\begin{equation} \label{Centrosymmetry}
V_{\Gamma}(\rho_{x_0}(x)) =  V_{\Gamma}(x)   \qquad \forall x \in \R^d, 
\end{equation}
where $\rho_{x_0}$ is the reflection with respect to the point $x_0$. Notice that the latter property involves both $V_{\Gamma}$ 
and $\Gamma$,  not just the Bravais lattice $\Gamma$. Whenever \eqref{Centrosymmetry} holds true, the Hamiltonian $H_{\Gamma}$ commutes with $R_{x_0}$, where $$\(R_{x_0} \psi \)(x) = \psi( \rho_{x_0}(x)).$$ 

Without loss of generality, we may always choose the origin of coordinates so that $x_0=0$, obtaining the identification $R_{x_0} \equiv R$.  The fact that the Hamiltonian commutes with $R$ yields a non-trivial unitary equivalence between $H(k)$ and $H(-k)$, which is the key to prove the following result. 

\begin{prop}[Space-reflection symmetry] \label{P SR-properties}
Under the hypotheses of Proposition \ref{P properties}, assume that the Hamiltonian is \emph{centrosymmetric}, 
that is,  $H_{\Gamma}$ commutes with the unitary operator  $R \colon \Hi \to \Hi$ defined above. 
Then the family $\set{P_*(k)}_{k \in \R^d}$ has the following property:
\begin{enumerate}[label=$(\mathrm{p}_{5})$, ref=$(\mathrm{p}_{5})$]
\item \label{p5} there exists a  unitary operator $R\sub{f}$ acting on $\Hf$ such that
\[ P_*(-k) =  R\sub{f} \, P_*(k) \, R\sub{f}^{-1}  \qquad \mbox{ and  } \qquad R\sub{f}^{2} = \Id. \]
In particular, one has explicitly $\( R\sub{f}\psi \)(y) = \psi(-y)$ for all $\psi \in \Hf$.  
\end{enumerate}
\end{prop}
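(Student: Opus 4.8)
The plan is to transport the reflection $R$ through the Bloch--Floquet transform and read off its fibrewise action, then convert the resulting intertwining of fibre Hamiltonians into an intertwining of spectral projectors. First I would compute $\widehat R := \UZ \, R \, \UZ^{-1}$ directly from the series \eqref{Zak transform}: writing $(Rw)(x) = w(-x)$ and performing the change of summation variable $\gamma \mapsto -\gamma$ (legitimate since $\Gamma$ is a lattice), a short manipulation gives
\[ (\UZ \, R \, w)(k,y) = (\UZ \, w)(-k,-y), \]
so that $\widehat R$ acts on $\Hi_\tau$ by $(\widehat R \, \varphi)(k) = R\sub{f} \, \varphi(-k)$, where $R\sub{f} \colon \Hf \to \Hf$ is the pointwise reflection $(R\sub{f}\psi)(y) = \psi(-y)$. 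This already produces the candidate operator $R\sub{f}$ announced in \ref{p5}, and, since reflecting twice is the identity, it delivers at once $R\sub{f}^* = R\sub{f} = R\sub{f}^{-1}$ and $R\sub{f}^2 = \Id$.

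Next I would feed the commutation hypothesis $[H_\Gamma, R]=0$ into this picture. Conjugating by $\UZ$ turns it into $[\int_\B^\oplus H(k)\,\di k, \widehat R]=0$; evaluating both sides on an arbitrary $\varphi \in \Hi_\tau$ and comparing fibres at momentum $k$ yields the intertwining identity
\[ R\sub{f} \, H(-k) = H(k) \, R\sub{f}, \qquad \text{equivalently} \qquad H(-k) = R\sub{f} \, H(k) \, R\sub{f}^{-1}, \]
on the common domain $\mathcal{D} = H^2(\T^d)$ (using $R\sub{f}^{-1}=R\sub{f}$). Thus $H(k)$ and $H(-k)$ are unitarily equivalent through $R\sub{f}$.

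Finally I would pass from Hamiltonians to projectors via the Riesz formula for $P_*(k)$. Since $(H(-k)-z\Id)^{-1} = R\sub{f}\,(H(k) - z\Id)^{-1}\,R\sub{f}^{-1}$, conjugating the contour integral gives $P_*(-k) = R\sub{f}\,\big[\tfrac{1}{2\pi\iu}\oint_{\mathcal{C}}(H(k)-z\Id)^{-1}\,\di z\big]\,R\sub{f}^{-1}$, and the bracket coincides with $P_*(k)$ precisely when a single contour $\mathcal{C}$ encircles the relevant part of the spectrum of both $H(k)$ and $H(-k)$. This is the one point requiring care, and it is the main (if minor) obstacle: I must verify $\sigma_*(-k) = \sigma_*(k)$ so that the same $\mathcal{C}$ serves both. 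But $R\sub{f}$-conjugation forces $\sigma(H(-k)) = \sigma(H(k))$ with multiplicities, whence the increasingly-ordered Bloch bands satisfy $E_n(-k) = E_n(k)$ for every $n$; in particular $\sigma_*(-k) = \set{E_n(-k) : n \in \mathcal{I}_*} = \sigma_*(k)$, and the gap condition \eqref{Gap condition} then guarantees a contour separating this set from the remaining spectrum at $-k$ exactly when it does at $k$. This produces $P_*(-k) = R\sub{f}\,P_*(k)\,R\sub{f}^{-1}$ with $R\sub{f}^2 = \Id$, which is \ref{p5}. The argument parallels the proof of Proposition \ref{P TR-properties}, the essential difference being that $R$ is unitary rather than antiunitary, so $R\sub{f}$ is \emph{linear} and always squares to $+\Id$; consequently no $\pm$ dichotomy appears here.
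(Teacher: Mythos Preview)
Your proposal is correct and follows essentially the same route as the paper: compute the Bloch--Floquet transform of $R$ to identify $R\sub{f}$ as the fibrewise reflection, deduce the intertwining $H(-k) = R\sub{f}\,H(k)\,R\sub{f}^{-1}$ from $[H_\Gamma,R]=0$, use the resulting parity $E_n(-k)=E_n(k)$ to justify a common contour in the Riesz formula, and conclude $P_*(-k) = R\sub{f}\,P_*(k)\,R\sub{f}^{-1}$. Your handling of the contour issue is in fact slightly more explicit than the paper's, but the argument is the same.
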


\begin{proof}[Sketch of the proof.] First, we compute the action of $R$ in Bloch-Floquet representation. For any compactly supported $\psi \in L^2(\R^d)$  one has     
\begin{align*} 
\( \UZ  R \, \psi \)(k,y)    & =   \frac{1}{|\B|^{1/2}} \,\, \sum_{\gamma\in\Gamma} \eu^{-\iu k \cdot (y - \gamma)} 
\, \(R \psi \)( y - \gamma) \,\, = \\
          & =     \frac{1}{|\B|^{1/2}} \,\, \sum_{\gamma\in\Gamma} \eu^{-\iu (- k) \cdot (- y + \gamma)} \, \psi( - y + \gamma) = \\ 
          &=    \( \UZ \, \psi \)(- k, - y).       
\end{align*}
Then, a standard density argument yields that  
$$ 
\(\UZ \, R \, \UZ^{-1} \, \phi \)(k,y) = \phi(-k, -y)  \qquad \forall \phi \in \Hi_{\tau}.
$$    

Since $[H_{\Gamma}, R] =0$, one obtains that in the Bloch-Floquet representation
$$
H(-k) =  R\sub{f} \, H(k) \, R\sub{f}^{-1},      \qquad k \in \R^d,
$$ 
which in particular implies the parity of the spectrum, \ie $\sigma\(H(-k)\) = \sigma\(H(k)\)$. 
Since the spectrum of $H(k)$ is pure point spectrum,  the operator  $H(k)$ is bounded from below 
and \emph{the eigenvalues are labeled in increasing order}, one gets $E_n(k) = E_n(-k)$.   

The projector $P_*(k)$ is characterized by the Riesz formula \eqref{Riesz formula }, where the integration contour encloses the set $\sigma_*(k)$ and no other point in the spectrum of $H(k)$. By the gap condition \eqref{Gap condition}, 
the integration contour $\mathcal{C}_{k_0}$ chosen at $k_0 \in \R^d$ can be used to evaluate $P_*(k)$ for $k$ in a sufficiently small neighborhood $O_{k_0}$of  $k_0$.  Moreover, since $E_n(k) = E_n(-k)$ for all $k \in \R^d$, the same integration contour used locally around the point $k_0$ can be conveniently used around the point $-k_0$. 
Thus, by making this convenient choice, one obtains that for every $k \in O_{k_0}$  
\begin{eqnarray*}
P_*(-k)  &=&  \frac{1}{2 \pi \iu} \int_{\mathcal{C}_{k_0}}  \left(H(- k) - z \Id\right)^{-1} \, \di z  =  \\
             &=&  \frac{1}{2 \pi \iu} \int_{\mathcal{C}_{k_0}}  \left( R\sub{f} \, H(k) \, R\sub{f}^{-1}  - z \Id\right)^{-1} \, \di z   
             = R\sub{f} \, P_*(k) \, R\sub{f}^{-1}.   
\end{eqnarray*}

\noindent By the arbitrarity of $k_0$, the claim is proved. 
\end{proof}

The breaking of SR-symmetry, namely property \ref{p5}, is a necessary condition to observe a non-zero 
piezoelectric current and macroscopic polarization in crystals, see  \cite{PanatiSparberTeufel2009} and 
references therein. 


\newpage
\section{The Bloch bundle and its geometry} 
\label{Sec:Bloch bundle}

\begin{quote}
\virg{\it In these days the angel of topology and the devil of abstract algebra fight for the soul of every individual discipline of mathematics.}  \begin{flushright} \textsc{(H. Weyl, {\it Invariants})} \end{flushright} 
\end{quote}

\medskip

In this Section, inspired again by the words of H. Weyl,  we will argue that the topological obstruction to the existence of a smooth Bloch frame is encoded in a smooth Hermitian vector bundle, baptized \emph{Bloch bundle} in \cite{Panati}. We also show 
that TR-symmetry, either of bosonic or fermionic type, implies the triviality of the Bloch bundle, and thus an affirmative answer to Question (Q).  

Abstracting from the specific case of periodic Schr\"odinger operators, we consider a family of orthogonal projectors acting on a separable Hilbert space $\Hi$, satisfying the following

\begin{assumption} \label{proj}
The family of orthogonal projectors $\set{P(k)}_{k \in \R^d} \subset \mathcal{B}(\Hi)$ enjoys the following properties:
\begin{enumerate}
\item[\namedlabel{item:smooth}{(P$_1$)}] \emph{smoothness}: the map $\R^d \ni k \mapsto P(k) \in \BH$ is $C^\infty$-smooth;
\item[\namedlabel{item:tau}{(P$_2$)}] \emph{$\tau$-covariance}: the map $k \mapsto P(k)$ is covariant with respect to a unitary representation $\tau \colon \Lambda \to \U(\Hi)$ of a maximal lattice $\Lambda \simeq \Z^d \subset \R^d$ on the Hilbert space $\Hi$, \ie
\[ P(k+\lambda) = \tau(\lambda) P(k) \tau(\lambda)^{-1}, \quad \text{for all } k \in \R^d, \lambda \in \Lambda; \]
\item[\namedlabel{item:TRS}{(P$_{3,\pm}$)}] \emph{time-reversal symmetry}: the map $k \mapsto P(k)$ is time-reversal symmetric, \ie there exists an antiunitary
operator $\Theta \colon \Hi \to \Hi$, called the \emph{time-reversal operator}, such that 
\[ \Theta^2 = \pm \Id_{\Hi} \quad \text{and} \quad P(-k) = \Theta P(k) \Theta^{-1}. \] 
\end{enumerate}

Moreover, the unitary representation $\tau \colon \Lambda \to \U(\Hi)$ and the time-reversal operator $\Theta \colon \Hi \to \Hi$ satisfy the compatibility condition
\begin{equation} \label{item:TRtau} \tag{P$_4$} 
\Theta \, \tau(\lambda) = \tau(\lambda)^{-1} \, \Theta \quad \text{for all } \lambda \in \Lambda. 
\end{equation} \hfill  $\lozenge$
\end{assumption}

The previous Assumptions retain only the fundamental $\Z^d$- and $\Z_2$-symmetries of the family of eigenprojectors of a time-reversal symmetric periodic Hamiltonian, as in Propositions \ref{P properties} and \ref{P TR-properties}.  In this abstract framework, the analog of Question (Q) is the existence of a smooth $\tau$-equivariant 
Bloch frame, as in the following

\goodbreak

\begin{dfn}[Bloch frame] \label{dfn:Bloch}
Let $\mathcal{P} =\set{P(k)}_{k \in \R^d}$ be a family of projectors satisfying Assumptions \ref{item:smooth} 
and \ref{item:tau}.  A \textbf{local Bloch frame} for $\mathcal{P}$ on a region $\Omega \subset \R^d$ is a map 
\begin{eqnarray*}
\Phi : & \Omega & \longrightarrow \quad \Hi \oplus \ldots \oplus \Hi = \Hi^m \\
& k &\longmapsto  \quad  (\phi_1(k), \ldots, \phi_m(k))
\end{eqnarray*}  such that for a.e. $k \in \Omega$ the set $\set{\phi_1(k), \ldots, \phi_m(k)}$ is an orthonormal basis spanning $\Ran P(k)$.  If $\Omega = \R^d$ we say that $\Phi$ is a \textbf{global Bloch frame}.
Moreover, we say that a (global) Bloch frame is 
\begin{enumerate}[label=$(\mathrm{F}_{\arabic*})$,ref=$(\mathrm{F}_{\arabic*})$]
\item  \label{item:F1}  \emph{smooth}  if the map $\phi_a : \R^d  \to \Hi^m$  is $C^{\infty}$-smooth for all $a \in \set{1, \ldots, m}$;
\item   \label{item:F2} \emph{$\tau$-equivariant} if 
\begin{equation*} \label{tau-cov}
\phi_a(k + \lambda) = \tau(\la) \, \phi_a(k) \quad \text{for all } k \in \R^d, \: \lambda \in \Lambda, \: a \in \set{1, \ldots, m}. \end{equation*} 
\end{enumerate}  \hfill  $\lozenge$ 
\end{dfn}


Following \cite{Panati}, one can construct a Hermitian vector bundle $\mathcal{E}_\PB = \left(E_\PB \xrightarrow{\pi} \T^d_* \right)$, with $\T^d_* := \R^d / \Lambda$, called the \emph{Bloch bundle}, starting from a family of projectors $\PB := \set{P(k)}_{k \in \R^d}$ satisfying properties \ref{item:smooth} and \ref{item:tau}. One proceeds as follows: Introduce the following equivalence relation on the set $\R^d \times \Hi$:
\[ (k, \phi) \sim_\tau (k', \phi') \quad \text{if and only if} \quad \exists\, \lambda \in \Lambda : k' = k - \lambda \text{ and } \phi' = \tau(\lambda) \phi. \]
The total space of the Bloch bundle is then
\[ E_\PB := \set{ [k,\phi]_\tau \in (\R^d \times \Hi) / \sim_\tau \: : \: \phi \in \Ran P(k)} \]
with projection $\pi([k,\phi]_\tau) = k \pmod \Lambda \in \T^d_*$.  By using the Kato-Nagy formula \cite[Sec. I.4.6]{Kato}, 
one shows that the previous definition yields a smooth vector bundle, which moreover inherits  from $\Hi$ a natural Hermitian structure.   Question (Q) in the previous Section can be shown \cite[Prop. 2]{Panati} to be equivalent to 

\medskip
\noindent \textbf{Question ($\mathbf{Q^{\prime}}$)}: is the Bloch bundle $\mathcal{E}_\PB = \left(E_\PB \xrightarrow{\pi} \T^d_* \right)$ \emph{trivial} in the category of smooth Hermitian vector bundles over the torus $\T^d_*$? 
\medskip


We recall that a smooth vector bundle $\mathcal{E} = \left(E \xrightarrow{\pi} M \right)$ of rank $m$ is called \emph{trivial} if there is a smooth isomorphism to the product bundle $\mathcal{T} = \left(M \times \C^m \xrightarrow{\mathrm{pr}_1} M \right)$, where $\mathrm{pr}_1$ is the projection on the first factor.

\noindent If the Bloch bundle $\mathcal{E}_\PB$ is trivial, then a smooth Bloch frame $\set{\phi_a}_{a=1, \ldots, m}$ can be constructed by means of a smooth isomorphism $F : \T^d_* \times \C^m \xrightarrow{\sim} E_\PB$ by setting $\phi_a(k) := F(k,e_a)$, where $\set{e_a}_{a=1,\ldots, m}$ is any orthonormal basis in $\C^m$. 
Viceversa, a global smooth Bloch frame $\set{\phi_a}_{a=1, \ldots, m}$ provides a smooth isomorphism $G :  \T^d_* \times \C^m 
 \xrightarrow{\sim}  E_\PB$ by setting 
$$
G \left(k,(v_1, \ldots, v_m) \right) = [k, v_1 \phi_1(k) + \cdots + v_m \phi_m(k)]_{\tau}.
$$

In general, the triviality of vector bundles on a low-dimensional torus $\T^d_*$ with $d \le 3$ is measured by the vanishing of its \emph{first Chern class} \cite[Prop. 4]{Panati}, defined in terms of the family of projectors $\set{P(k)}_{k \in \R^d}$ by the formula
\begin{equation} \label{c1}
c_1(\mathcal{E}_\PB) = \frac{1}{2 \pi \iu} \sum_{1 \le \mu < \nu \le d} \Omega_{\mu \nu}(k) \di k_\mu \wedge \di k_\nu, \text{ with } \: \Omega_{\mu \nu}(k) = \tr_{\Hi} \left(P(k) \left[ \partial_\mu P(k), \partial_\nu P(k) \right] \right).
\end{equation}

Under the hypothesis \ref{item:TRS}, the Bloch bundle is equipped with a further structure, namely a fibre-wise antilinear morphism $\Theta_\PB \colon \mathcal{E}_\PB \to \mathcal{E}_\PB$ such that the following diagram commutes:
\[ \xymatrix{ E_\PB \ar[r]^{\Theta_\PB} \ar[d] & E_\PB \ar[d] \\ \T^d_* \ar[r]^{c} & \T^d_*} \]
where $c \colon \T^d_* \to \T^d_*$ denotes the involution $c(k) = -k$. This means that a vector in the fibre $\Ran P(k)$ is mapped via $\Theta_\PB$ into a vector in the fibre $\Ran P(-k)$. The morphism $\Theta_\PB \colon \mathcal{E}_\PB \to \mathcal{E}_\PB$ still satisfies $\Theta_\PB^2 = \pm \Id$, \ie it squares to the vertical automorphism of $\PB$ acting fibre-wise by multiplication by $\pm 1$. Following \cite{GrafPorta}, we call the previous structure a \emph{TR-symmetric Bloch bundle}.

The presence of this further $\Z_2$-symmetry is the key tool to provide a positive answer to Question ($\mathrm{Q}^{\prime}$), as we are now going to show.


\subsection{Triviality of TR-symmetric Bloch bundles: bosonic and fermionic cases}
\label{Sec:trivial}

The main result of \cite{Panati} is the following.

\begin{thm}[{\cite[Thm. 1]{Panati}}]
Let $d \le 3$, and let $\mathcal{P} = \set{P(k)}_{k \in \R^d}$ be a family of projectors satisfying properties \ref{item:smooth}, \ref{item:tau} and (P$_{3,+}$). Then the Bloch bundle $\mathcal{E}_\PB = \left(E_\PB \xrightarrow{\pi} \T^d_* \right)$ is trivial in the category of smooth Hermitian vector bundles.
\end{thm}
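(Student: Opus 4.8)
The plan is to reduce the assertion to the vanishing of the first Chern class and then to exploit the antiunitary symmetry $\Theta$ to kill all the Chern numbers. Since $d \le 3$, the criterion recalled before the statement (triviality $\Leftrightarrow c_1(\mathcal{E}_\PB)=0$, together with the fact that over a torus $H^2(\T^d_*;\Z)$ is torsion-free) reduces everything to showing that the periods of the curvature $2$-form vanish, \ie that $\int_{\T^2_{\mu\nu}} \Omega_{\mu\nu}(k)\, \di k_\mu\, \di k_\nu = 0$ for every coordinate $2$-subtorus $\T^2_{\mu\nu} \subset \T^d_*$, where $\Omega_{\mu\nu}$ is the Berry curvature appearing in \eqref{c1}. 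First I would establish that $\Omega_{\mu\nu}$ is an \emph{odd} function of $k$, and then conclude by a change-of-variables argument.

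To obtain the parity of the curvature, I would differentiate the time-reversal covariance $P(-k) = \Theta\, P(k)\, \Theta^{-1}$ from \ref{item:TRS}. Since $\Theta$ is $k$-independent, the chain rule gives $(\partial_\mu P)(-k) = -\,\Theta\,(\partial_\mu P)(k)\,\Theta^{-1}$. Substituting this, together with the covariance of $P$ itself, into $\Omega_{\mu\nu}(-k) = \tr_\Hi\!\big(P(-k)[(\partial_\mu P)(-k),(\partial_\nu P)(-k)]\big)$, the two sign factors cancel and the antilinear conjugation factors through the commutator and the product, yielding $\Omega_{\mu\nu}(-k) = \tr_\Hi\!\big(\Theta\, P(k)[\partial_\mu P(k),\partial_\nu P(k)]\,\Theta^{-1}\big)$.

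The key algebraic input is then the trace identity for an antiunitary $\Theta$, namely $\tr_\Hi(\Theta A \Theta^{-1}) = \overline{\tr_\Hi(A)}$, which one checks by evaluating the trace in the orthonormal basis $\{\Theta^{-1} e_i\}$ and using $\scal{\Theta x}{\Theta y} = \overline{\scal{x}{y}}$. This gives $\Omega_{\mu\nu}(-k) = \overline{\Omega_{\mu\nu}(k)}$. Finally, because $P$ is an orthogonal projector, $[\partial_\mu P, \partial_\nu P]$ is anti-self-adjoint, so $P[\partial_\mu P,\partial_\nu P]$ has purely imaginary trace; hence $\overline{\Omega_{\mu\nu}(k)} = -\Omega_{\mu\nu}(k)$ and therefore $\Omega_{\mu\nu}(-k) = -\Omega_{\mu\nu}(k)$.

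With the curvature odd, the conclusion is geometric. The antipodal involution $c(k)=-k$ restricts to an orientation-preserving diffeomorphism of each even-dimensional $\T^2_{\mu\nu}$, and $c^*\big(\Omega_{\mu\nu}(k)\,\di k_\mu \wedge \di k_\nu\big) = \Omega_{\mu\nu}(-k)\,\di k_\mu \wedge \di k_\nu = -\Omega_{\mu\nu}(k)\,\di k_\mu \wedge \di k_\nu$; integrating and using $\int = \int \circ\, c^*$ forces each period to equal its own opposite, hence to vanish. Thus $c_1(\mathcal{E}_\PB)=0$ and the bundle is trivial. I expect the points requiring care to be the correct bookkeeping of the antilinearity (the trace identity and the sign in the differentiated covariance), rather than any deep obstacle; it is worth noting that this curvature computation never uses the sign of $\Theta^2$, so the same argument will later cover the fermionic case as well, the genuinely fermionic subtleties lying elsewhere.
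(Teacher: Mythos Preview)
Your proposal is correct and follows essentially the same route as the paper: show that the Berry curvature $\Omega_{\mu\nu}$ is odd under $k\mapsto -k$ by differentiating the time-reversal covariance, then conclude that all Chern numbers vanish by a symmetric-domain argument, and finally invoke the $d\le 3$ criterion that $c_1=0$ implies triviality. The only cosmetic differences are that the paper phrases the antiunitary trace identity as $\tr_\Hi(\Theta A\Theta^{-1})=\tr_\Hi(A^*)$ (and verifies it using $\Theta^{-1}=\pm\Theta$) rather than your equivalent $\tr_\Hi(\Theta A\Theta^{-1})=\overline{\tr_\Hi(A)}$, and that the paper splits the $2$-cell into halves $\B_{\mu\nu}^\pm$ instead of using your global pullback by $c$; your observation that the sign of $\Theta^2$ plays no role in the curvature computation is accurate and is precisely why the paper's proof of the fermionic Theorem~\ref{thm:FermionicTrivial} reads identically to the bosonic case on this point.
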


The above Theorem answers positively to Question ($\mathrm{Q}^{\prime}$), and hence to Question (Q), in the presence of \emph{bosonic} time-reversal symmetry: if the system enjoys time-reversal symmetry of bosonic type, then there exists a set of composite Wannier functions which are almost-exponentially localized. A natural question arises, namely whether the same kind of result holds also in presence of \emph{fermionic} time-reversal symmetry. This is proved in the following

\begin{thm} \label{thm:FermionicTrivial}
Let $d \le 3$, and let $\mathcal{P} = \set{P(k)}_{k \in \R^d}$ be a family of projectors satisfying properties \ref{item:smooth}, \ref{item:tau} and (P$_{3,-}$). Then the Bloch bundle $\mathcal{E}_\PB = \left(E_\PB \xrightarrow{\pi} \T^d_* \right)$ is trivial in the category of smooth Hermitian vector bundles.
\end{thm}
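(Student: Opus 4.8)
The plan is to reduce the statement to the vanishing of a single characteristic class and then let the antiunitary symmetry $\Theta$ kill it. Since $d \le 3$, by \cite[Prop.\ 4]{Panati} the smooth Hermitian bundle $\mathcal{E}_\PB$ is trivial \emph{if and only if} its first Chern class $c_1(\mathcal{E}_\PB) \in H^2(\T^d_*,\Z)$ vanishes. As $H^2(\T^d_*,\Z)$ is free abelian (isomorphic to $\Z^{\binom{d}{2}}$), it will in fact suffice to prove either that all Chern numbers vanish or, more weakly, that $c_1(\mathcal{E}_\PB)$ is $2$-torsion. I will show both, and the key is that this is exactly the constraint imposed by property \ref{item:TRS}.

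First I would compute the parity of the Berry curvature directly from the explicit formula \eqref{c1}. Differentiating the covariance relation $P(-k) = \Theta\,P(k)\,\Theta^{-1}$ (recalling that conjugation by the fixed operator $\Theta$ commutes with the real-linear operation of differentiation), and then feeding the result into $\Omega_{\mu\nu}$ and using the antiunitary trace identity $\tr(\Theta A \Theta^{-1}) = \overline{\tr A}$, one finds
\[ (\partial_\mu P)(-k) = -\,\Theta\,(\partial_\mu P)(k)\,\Theta^{-1}, \qquad \Omega_{\mu\nu}(-k) = \overline{\Omega_{\mu\nu}(k)} = -\,\Omega_{\mu\nu}(k), \]
the two signs from the derivatives cancelling inside the commutator, and the final equality holding because $\Omega_{\mu\nu}$ is purely imaginary (i.e.\ $\tfrac{1}{2\pi\iu}\Omega_{\mu\nu}$ is real). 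Integrating $\Omega_{\mu\nu}$ over each coordinate $2$-subtorus taken at a symmetric slice (say $k_\rho = 0$, so that the involution maps it to itself) and substituting $k \mapsto -k$ — an orientation- and measure-preserving diffeomorphism of that subtorus — then shows every Chern number equals its own opposite, hence is zero. Therefore $c_1(\mathcal{E}_\PB) = 0$.

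The conceptual reason behind this cancellation, which I would record as well, is that \ref{item:TRS} equips $\mathcal{E}_\PB$ with the fibre-wise antilinear morphism $\Theta_\PB$ covering $c(k) = -k$. Such an antilinear map over $c$ is the same datum as a linear isomorphism $c^{*}\mathcal{E}_\PB \cong \overline{\mathcal{E}_\PB}$ with the complex-conjugate bundle, so by naturality of Chern classes $c^{*}c_1(\mathcal{E}_\PB) = c_1(\overline{\mathcal{E}_\PB}) = -c_1(\mathcal{E}_\PB)$. On the other hand $c$ acts as $-\Id$ on $H^1(\T^d_*,\Z)$, hence as $+\Id$ on $H^2(\T^d_*,\Z)$, the latter being the second exterior power of $H^1(\T^d_*,\Z)$; thus $c^{*}c_1(\mathcal{E}_\PB) = c_1(\mathcal{E}_\PB)$. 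Comparing the two relations yields $2\,c_1(\mathcal{E}_\PB) = 0$, and torsion-freeness again forces $c_1(\mathcal{E}_\PB) = 0$.

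The decisive observation is that nothing above uses the sign of $\Theta^2$: both the curvature parity and the isomorphism $c^{*}\mathcal{E}_\PB \cong \overline{\mathcal{E}_\PB}$ depend only on the antiunitarity of $\Theta$ and on $P(-k)=\Theta P(k)\Theta^{-1}$. Hence the Chern-class argument that disposes of the bosonic Theorem~1 applies verbatim under (P$_{3,-}$), which is the content of Theorem~\ref{thm:FermionicTrivial}. The point I would be most careful about is therefore conceptual rather than computational: one must resist the temptation to read genuine fermionic topology into this statement. The $\Z_2$ distinction $\Theta^2=-1$ is real, but it obstructs the finer existence of a smooth \emph{$\Theta$-equivariant} Bloch frame and is measured by the Fu--Kane--Mele index; triviality as a plain smooth Hermitian bundle is governed solely by $c_1$, which vanishes regardless. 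Keeping these two questions separate is the main subtlety of the proof.
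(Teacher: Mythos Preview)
Your proof is correct and follows essentially the same route as the paper: show that the Berry curvature $\Omega_{\mu\nu}$ is odd under $k\mapsto -k$ via the antiunitary trace identity (your $\tr(\Theta A\Theta^{-1})=\overline{\tr A}$ together with $\Omega_{\mu\nu}\in\iu\R$ is equivalent to the paper's $\tr(\Theta A\Theta^{-1})=\tr(A^*)$), conclude that all Chern numbers over the generating $2$-cycles vanish, and invoke \cite[Prop.~4]{Panati}. Your additional cohomological argument ($c^*\mathcal{E}_\PB\cong\overline{\mathcal{E}_\PB}$, $c^*$ acting trivially on $H^2$, hence $2c_1=0$ in a torsion-free group) is not in the paper but is a clean conceptual complement, and your closing observation that the sign of $\Theta^2$ plays no r\^ole here is exactly the point.
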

\begin{proof}
Arguing as in the proof of \cite[Thm. 1]{Panati}, the crucial point  is to show that, under hypothesis (P$_{3,-}$), the function
\[ \Omega_{\mu \nu}(k) = \tr_{\Hi} \left(P(k) \left[ \partial_\mu P(k), \partial_\nu P(k) \right] \right), \quad \mu, \nu \in \set{1, \ldots, d}, \]
which appears in the definition \eqref{c1} of the first Chern class, is \emph{odd} with respect to $k$, that is, $\Omega_{\mu \nu}(-k) = - \Omega_{\mu \nu}(k)$.

To prove this, we first observe that
\[ \Omega_{\mu \nu}(-k) = \tr_{\Hi} \left(P(-k) \left[ \partial_\mu P(-k), \partial_\nu P(-k) \right] \right). \]
Differentiating both sides of the equality $P(-k) = \Theta P(k) \Theta^{-1}$ (compare (P$_{3,-}$)) with respect to $k_\mu$, we obtain
\[ \partial_\mu P(-k) = - \Theta \partial_\mu P(k) \Theta^{-1} \]
and hence
\[ \Omega_{\mu \nu}(-k) = \tr_{\Hi} \left( \Theta P(k) \Theta^{-1} \left[ - \Theta \partial_\mu P(k) \Theta^{-1} , - \Theta \partial_\nu P(k) \Theta^{-1}  \right] \right). \]
The above expression simplifies to
\begin{align*}
\Omega_{\mu \nu}(-k)  
 = \tr_{\Hi} \left( \Theta P(k) \left[ \partial_\mu P(k), \partial_\nu P(k)\right] \Theta^{-1} \right).
\end{align*}

Since $\Theta$ is an \emph{anti}unitary operator, we have that $\tr_{\Hi}(\Theta A \Theta^{-1}) = \tr_{\Hi}(A^*)$ for any trace-class operator on $\Hi$. Indeed, by definition the trace of an operator $A$ is given by
\[ \tr_{\Hi}(A) = \sum_{n \in \N} \scal{\psi_n}{A \psi_n} \]
where $\set{\psi_n}_{n \in \N}$ is \emph{any} orthonormal basis of the Hilbert space $\Hi$.  
Noting now that $\Theta^{-1} = - \Theta$ (since $\Theta^2 = - \Id$), we can compute
\begin{align*}
\tr_{\Hi}(\Theta A \Theta^{-1}) & = - \tr_{\Hi}(\Theta A \Theta) = - \sum_{n \in \N} \scal{\psi_n}{\Theta A \Theta \psi_n} = \\
& = - \sum_{n \in \N} \scal{\Theta^2 A \Theta \psi_n}{\Theta \psi_n} = \sum_{n \in \N} \scal{A \Theta \psi_n}{\Theta \psi_n} = \\
& = \sum_{n \in \N} \scal{\Theta \psi_n}{A^* \Theta \psi_n} = \sum_{n \in \N} \scal{\widetilde{\psi}_n}{A^* \widetilde{\psi}_n} = \\
& = \tr_{\Hi}(A^*),
\end{align*}
where in the last equality we used the fact that $\widetilde{\psi}_n := \Theta \psi_n$ is just another orthonormal basis of $\Hi$, and the definition of the trace does not depend on the choice of the basis.

Hence
\begin{align*}
\Omega_{\mu \nu}(-k) & = \tr_{\Hi} \left\{ \left( P(k) \left[ \partial_\mu P(k), \partial_\nu P(k)\right] \right)^* \right\} = \\
& = \tr_{\Hi} \left( P(k)^* \left[ \partial_\mu P(k), \partial_\nu P(k)\right]^* \right).
\end{align*}
Observe now that $[A,B]^* = - [A^*, B^*]$ for $A,B \in \mathcal{B}(\Hi)$, and that $P(k)$ and its derivatives are \emph{self-adjoint} operators. This allows us to conclude finally that
\begin{equation} \label{Curvature odd}
\Omega_{\mu \nu}(-k) = - \tr_{\Hi} \left( P(k) \left[ \partial_\mu P(k), \partial_\nu P(k)\right] \right) = - \Omega_{\mu \nu}(k)
\end{equation}
as claimed.

The property of $\Omega_{\mu \nu}(k)$ of being odd implies that the first Chern class $c_1(\mathcal{E}_\PB)$ must vanish. Indeed, by de Rham duality it suffices to check that, when $c_1(\mathcal{E}_\PB)$ is integrated on any $2$-cycle
\[ \B_{\mu \nu} := \set{k \in \B: k_{\alpha} = 0 \text{ if } \alpha \ne \mu, \nu}, \quad 1 \le \mu < \nu \le d, \]
then the integral vanishes: this is because the $2$-cycles $\set{\B_{\mu \nu}}_{1 \le \mu < \nu \le d}$ generate the homology group $H_2(\T^d_*;\Z)$ of the $d$-torus. Divide $\B_{\mu \nu} = \B_{\mu \nu}^+ \cup \B_{\mu \nu}^-$, where $\B_{\mu \nu}^+$ (respectively $\B_{\mu \nu}^-$) contains only the points of $\B_{\mu \nu}$ with positive (respectively negative) $k_\mu$-coordinate. We have now
\begin{align*}
\int_{\B_{\mu \nu}} c_1(\mathcal{E}_\PB) & = \frac{1}{2 \pi \iu} \int_{\B_{\mu \nu}} \Omega_{\mu \nu}(k) \di k_\mu \wedge \di k_\nu = \\
& = \frac{1}{2 \pi \iu} \left( \int_{\B_{\mu \nu}^+} \Omega_{\mu \nu}(k) \di k_\mu \wedge \di k_\nu + \int_{\B_{\mu \nu}^-} \Omega_{\mu \nu}(k) \di k_\mu \wedge \di k_\nu \right) = \\
& = \frac{1}{2 \pi \iu} \left( \int_{\B_{\mu \nu}^+} \Omega_{\mu \nu}(k) \di k_\mu \wedge \di k_\nu + \int_{\B_{\mu \nu}^+} \Omega_{\mu \nu}(-k) \di k_\mu \wedge \di k_\nu  \right) = \\
& = \frac{1}{2 \pi \iu} \left( \int_{\B_{\mu \nu}^+} \Omega_{\mu \nu}(k) \di k_\mu \wedge \di k_\nu - \int_{\B_{\mu \nu}^+} \Omega_{\mu \nu}(k) \di k_\mu \wedge \di k_\nu \right) = 0,
\end{align*}
and this concludes the proof that $c_1(\mathcal{E}_\PB) = 0$.

By the technical lemma in \cite[Section 2.3]{Panati}, when $d \le 3$ the vanishing of the first Chern class is a necessary \emph{and sufficient} condition for the Bloch bundle $\mathcal{E}_\PB$ to be trivial as a smooth Hermitian vector bundle. This is equivalent to the existence of a global smooth $\tau$-equivariant Bloch frame for the family of projectors $\set{P(k)}_{k \in \R^d}$.
\end{proof}

\begin{rmk}[Exponentially localized Wannier functions]
The result of \cite{Panati} actually holds in the analytic category, if Assumption \ref{proj} is changed suitably to accommodate for the function $k \mapsto P(k)$ to be analytic in a strip $O_\alpha$  of width $2 \alpha$ around the ``real axis'' $\R^d \subset \C^d$. Under this modified assumption, one can indeed prove that if (P$_{3,+}$) holds then the corresponding Bloch bundle is trivial in the category of \emph{holomorphic} Hermitian vector bundles over the region $O_{\alpha} \supset \T^d_*$, by a general argument that goes under the name of \emph{Oka's principle} \cite[Chap. V]{Oka}. This is equivalent to the existence of composite Wannier functions which are \emph{exponentially localized}, namely
\[ \int_{\R^d} \eu^{2 \beta |x|} \left| w_a(x) \right|^2 \, \di x < \infty \quad \text{for all } a \in \set{1, \ldots, m}, \: 0 \le \beta < \alpha. \]

The same principle allows to extend our result (Theorem \ref{thm:FermionicTrivial}) to the analytic category also under hypothesis (P$_{3,-}$), and provides the existence of exponentially localized composite Wannier functions also in systems which have a time-reversal symmetry of fermionic type.
\hfill $\Diamond$
\end{rmk}

\begin{rmk}[Consequences of SR-symmetry]  Consider a gapped periodic system which enjoys both TR-symmetry, either of bosonic or fermionic type, and SR-symmetry.  As a consequence of \ref{item:TRS},  the Berry curvature of the Bloch bundle is odd, as showed by equation \eqref{Curvature odd}. On the other hand, arguing as in the proof of Theorem~\ref{thm:FermionicTrivial}, one shows that (the abstract analog of)  property \ref{p5} implies that the Berry curvature is even, namely  $\Omega_{\mu \nu}(-k) =  \Omega_{\mu \nu}(k)$, in view of the fact that $R\sub{f}$ is a \emph{unitary} operator. Thus, in systems which enjoy both TR- and SR-symmetry, \emph{the Berry curvature is identically zero}. This fact has some interesting consequences, since it implies that the parallel transport induced by the Berry connection is locally well-defined, \ie independent of the path chosen to connect the initial and the final point on the Brillouin torus $\T^d_*$. Notice, however,  that the holonomy induced by parallel transport might still be non-trivial, but only on paths which are not homotopic to the trivial path.  
\hfill $\Diamond$
\end{rmk}


\subsection{\texorpdfstring{$\Z_2$ invariants of fermionic TR-symmetric Bloch bundles}{Z2 invariants of fermionic TR symmetric Bloch bundles}}
\label{Sec:Z_2 invariants}

We have seen in the last Section that, with the help of the extra $\Z_2$-symmetry given by time-reversal (be it either bosonic or fermionic), we were able to ensure that the Bloch bundle is trivial. As was already observed before, this is equivalent to the existence of a Bloch frame which is both smooth and periodic, or rather $\tau$-equivariant, \ie satisfying \ref{item:F1} and \ref{item:F2}. The property of $\tau$-equivariance for a Bloch frame is clearly a compatibility of the frame itself with the corresponding $\Z^d$-simmetry of the family of projectors $\PB$, namely \ref{item:tau}. Once we have a family of projectors $\PB$ which satisfies also \ref{item:TRS}, is it possible to make the Bloch frame to be also compatible with the $\Z_2$-symmetry $\Theta$?

This issue is extremely relevant in the context of TR-symmetric topological insulators \cite{HasanKane}.  Indeed, 
as conjectured in \cite{FuKa,FuKaneMele}, a positive or negative answer to the previous question distinguishes between ordinary insulators and topological insulators, respectively.  While a rigorous mathematical approach to this issue has been first discussed in \cite{GrafPorta} for $2$-dimensional systems,  we focus here on a different method \cite{FiMoPa1, FiMoPa2} which has a natural generalization to $3$-dimensional systems.

A first remark is now in order, namely how one should formulate the above-mentioned compatibility condition. Looking at \ref{item:TRS}, the most natural choice would be to require that $\Phi(-k) = \Theta \Phi(k)$. However, if $k=0$ and the time-reversal operator $\Theta$ is of fermionic type, the bilinear form
\[ (\phi, \psi) := \scal{\Theta \phi}{\psi}, \quad \phi, \psi \in \Ran P(0) \]
is well-posed (notice that $\Ran P(0)$ is an invariant subspace for the action of $\Theta$ by \ref{item:TRS}) and defines a \emph{symplectic form} on $\Ran P(0)$. In particular, this implies that $\Ran P(0)$, and by continuity also $\Ran P(k)$ for all $k \in \R^d$, must be even-dimensional. If $\Phi(0) = \Theta \Phi(0)$ is a Bloch frame, each of its components $\phi$ satisfies $\scal{\phi}{\phi}=1$ by the normalization condition, and  
\[ \scal{\phi}{\phi} = \scal{\Theta \phi}{\phi} = (\phi, \phi) = 0 \]
by the skew-symmetry of the symplectic form $(\cdot, \cdot)$. Hence we are forced to require a more refined compatibility condition for a Bloch frame with time-reversal symmetry. 
Following the previous literature \cite{FuKa, FuKaneMele, GrafPorta},  we introduce a ``reshuffling matrix'' $\eps$ that exchanges the order of the entries of a Bloch frame $\Phi$ when going from $k$ to $-k$. This leads us to set the following

\medskip
\goodbreak 

\begin{dfn}[TR-symmetric Bloch frame]
Let $\PB = \set{P(k)}_{k \in \R^d}$ be a family of projectors satisfying \ref{item:smooth}, \ref{item:tau} and \ref{item:TRS}. A global Bloch frame $\Phi = \set{\phi_1, \ldots, \phi_m}$ for $\PB$ is said to be
\begin{enumerate} 
\item[\namedlabel{item:F3}{$(\mathrm{F}_{3})$}]  \emph{time-reversal symmetric} if
$$ 
\phi_b(-k) = \sum_{a=1}^{m} \Theta \phi_a(k) \eps_{ab}, 
$$ where 
$$
\eps =  
\begin{cases} 
\id_m & \text{if $\PB$ satisfies (P$_{3,+}$),} \\    \\
\begin{pmatrix} 0 & \id_{m/2} \\ - \id_{m/2} & 0 \end{pmatrix} & \text{if $\PB$ satisfies (P$_{3,-}$).} 
\end{cases}
$$ 
\end{enumerate} 
\hfill $\Diamond$
\end{dfn}

The natural issue that arises is now

\medskip
\noindent \textbf{Question (Q$_d$)}. Let $\PB = \set{P(k)}_{k \in \R^d}$ be a family of projectors satisfying \ref{item:smooth}, \ref{item:tau} and \ref{item:TRS}. Does there exist a global smooth Bloch frame which is both $\tau$-equivariant and time-reversal symmetric, \ie satisfying \ref{item:F1}, \ref{item:F2} and \ref{item:F3}?
\medskip

This Question is answered in \cite{FiMoPa1} for a bosonic TR-symmetry. As for a fermionic one, it is answered in 
\cite{GrafPorta} for $d=2$, and in \cite{FiMoPa2} for $d \leq 3$. In both approaches to the fermionic case, an obstruction appears, which for $d=2$ is encoded by a $\Z_2$-valued index, denoted by $\delta(\PB)$ in \cite{FiMoPa2}. Although the latter index is defined differently in \cite{GrafPorta} and \cite{FiMoPa2}, \emph{a posteriori} one proves that they agree, under suitable hypotheses. 
For $3$-dimensional systems, instead, the obstruction is measured by four  $\Z_2$-valued indices, denoted by
$\set{\delta_{1,0}(\PB), \delta_{j,+}(\PB)}_{j \in \set{1,2,3}}$ in \cite{FiMoPa2}.  While we refer to \cite{GrafPorta} for an overview of the method by Graf and Porta, we summarize our results in the following


\begin{thm}[{\cite{FiMoPa1}, \cite{FiMoPa2}}]
Let $\PB = \set{P(k)}_{k \in \R^d}$ be a family of projectors as in Question $(\mathrm{Q}_d)$. Assume that $1 \le d \le 3$. Then a global Bloch frame for $\PB$ satisfying \ref{item:F1}, \ref{item:F2} and \ref{item:F3} exists:
\begin{description}
\item[If $\PB$ satisfies (P$_{3,+}$)] always; \\[-3mm]
\item[If $\PB$ satisfies (P$_{3,-}$)]  according to the dimension: 
\begin{description}
\item[If $d=1$] always;
\item[If $d=2$] if and only if 
$$ 
\delta(\PB) = 0 \in \Z_2.
$$ 
\item[If $d=3$] if and only if 
$$ 
\delta_{1,0}(\PB) = \delta_{1,+}(\PB) = \delta_{2,+}(\PB) = \delta_{3,+}(\PB) = 0 \in \Z_2.
$$ 
\end{description}
\end{description}
\end{thm}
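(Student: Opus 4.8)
The plan is to prove the existence statement constructively, by obstruction theory adapted to the involution $c(k)=-k$ on $\T^d_*$, proceeding by induction on the dimension $d$. Since Theorem~\ref{thm:FermionicTrivial} (and its bosonic counterpart) already guarantees a smooth $\tau$-equivariant Bloch frame satisfying \ref{item:F1} and \ref{item:F2}, the genuinely new content is to find a gauge in which, in addition, the time-reversal constraint \ref{item:F3} holds. I would work on an \emph{effective unit cell} $\Bred$, namely a fundamental domain for the $c$-action inside $\B$, together with the $2^d$ time-reversal invariant momenta (TRIM) $k_0$ with $2k_0\in\Lambda$ sitting on its symmetric boundary. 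At each such $k_0$ the subspace $\Ran P(k_0)$ is $\Theta$-invariant by \ref{item:TRS}, so $\Theta$ restricts there to an antiunitary map squaring to $\pm\Id$. In the bosonic case this is a real structure and one selects a real orthonormal basis; in the fermionic case it is a quaternionic structure, and since $\Ran P(k_0)$ is even-dimensional (Kramers degeneracy, as recalled before the statement) one selects a symplectic orthonormal basis compatible with the reshuffling matrix $\eps$.

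For the base case $d=1$, the cell $\Bred$ is an interval whose endpoints are the two TRIM. After fixing a symmetry-adapted frame at each endpoint as above, I would interpolate smoothly across the interval, then define the frame on the complementary half of the circle by \ref{item:F3} and extend periodically by \ref{item:F2}; the compatibility \eqref{item:TRtau} between $\Theta$ and $\tau$ ensures that the two prescriptions agree on the overlap. Because interpolation over a contractible interval is unobstructed and the relevant symmetry-adapted unitary group is connected, this always succeeds, in both the bosonic and the fermionic case.

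For the inductive step ($d=2$ and $d=3$), I would fibre $\T^d_*$ over the last coordinate $k_d$, noting that the slices $k_d=0$ and $k_d=1/2$ are themselves TR-symmetric $(d-1)$-dimensional families to which the inductive hypothesis applies. Solving the lower-dimensional problem on these two distinguished slices, I would then extend the frame over the region $0\le k_d\le 1/2$ and propagate to the rest of $\T^d_*$ via \ref{item:F3} and \ref{item:F2}. The obstruction to carrying this out smoothly and consistently is captured by the homotopy class of the matching transition relating the frame obtained by continuation from $k_d=0$ to the one prescribed at $k_d=1/2$: in the bosonic case this class always vanishes, whereas in the fermionic case it defines the $\Z_2$-valued index $\delta(\PB)$ for $d=2$, and assembles into the four indices $\delta_{1,0}(\PB),\delta_{1,+}(\PB),\delta_{2,+}(\PB),\delta_{3,+}(\PB)$ for $d=3$, which should match the one strong and three weak Fu--Kane--Mele invariants of \cite{FuKa,FuKaneMele}.

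The hard part will be the fermionic inductive extension: proving that the $\Z_2$-index just described is simultaneously \emph{well-defined} and the \emph{complete} obstruction, so that its vanishing is equivalent to the existence of a frame satisfying \ref{item:F1}, \ref{item:F2} and \ref{item:F3}. This demands a careful homotopy-theoretic analysis of maps from the effective unit cell into the unitary group subject to the symmetry constraints at the TRIM and along the symmetric boundary, identifying the governing obstruction group as $\Z_2$ in precisely the degrees dictated by $d$; one must in particular check that the indices are independent of the auxiliary $\tau$-equivariant frame chosen at the outset and of the interior interpolation. A final, separate step is to verify that these invariants coincide with the Fu--Kane--Mele indices, by comparing the present construction with the gauge-theoretic definitions in \cite{FuKa,FuKaneMele,GrafPorta}.
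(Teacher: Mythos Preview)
Your proposal is a reasonable obstruction-theoretic outline, but it takes a genuinely different route from the one sketched in the paper (which, incidentally, does not prove the theorem in full but cites \cite{FiMoPa1,FiMoPa2} and only outlines the $d=2$ fermionic case). The paper's approach does \emph{not} proceed by induction on $d$ via fibering over the last coordinate. Instead, for $d=2$ it starts from an arbitrary smooth frame $\Psi$ on the contractible effective cell $\Bred$ (not from a global $\tau$-equivariant frame as you do), then constructs a single unitary-valued map $\widehat{U}\colon\partial\Bred\to\U(\C^m)$ which symmetrizes $\Psi$ along the whole boundary loop at once; the invariant $\delta(\PB)$ is defined as the winding number of $\det\widehat{U}$ modulo $2$, and the obstruction is to extending $\widehat{U}$ from $\partial\Bred\simeq S^1$ into the interior of $\Bred$. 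Your slice-and-match picture (solve on $k_d=0$ and $k_d=1/2$, then compare) is a different, more inductive packaging of the same homotopy content; it has the advantage of making the passage from $d=2$ to $d=3$ look uniform, whereas the paper's boundary-winding description is tailored to $d=2$ and makes the numerical identity $\delta(\PB)=\deg(\det\widehat{U})\bmod 2$ completely explicit.

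One point in your plan deserves more care: in the $d=3$ step, fibering over a single coordinate $k_3$ naturally produces at most two $\Z_2$ numbers (one from each TR-invariant slice), not four. To recover the full set $\{\delta_{1,0},\delta_{1,+},\delta_{2,+},\delta_{3,+}\}$ you will need either to fibre over all three coordinates and show the resulting collection of slice invariants is not independent, or to argue separately that the two-dimensional restrictions to the planes $k_j=1/2$ each carry their own $\delta$ and then identify the remaining ``strong'' invariant with an extension obstruction in the bulk. Your phrase ``assembles into the four indices'' hides exactly this bookkeeping, which in \cite{FiMoPa2} is handled by a direct analysis on the three-dimensional effective cell rather than by pure induction.
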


From the above table of results, we see that a positive answer to Question (Q$_d$) is in general \emph{topologically obstructed}: in particular, new interesting topological obstructions appear in the case of fermionic time-reversal symmetries for $d=2, 3$. Moreover, the peculiarity of these obstruction is that they are \emph{$\Z_2$-valued}, in contrast with the case of non-TR-symmetric Bloch bundles, where the obstruction to the existence of a smooth $\tau$-equivariant Bloch frame is encoded in the $\Z$-valued Chern indices.

We briefly recall, for the reader's convenience, our definition of the $\Z_2$ invariant $\delta(\PB)$ for a $2$-dimensional family of projectors $\PB = \set{P(k)}_{k \in \R^2}$ satisfying \ref{item:smooth}, \ref{item:tau} and (P$_{3,-}$). 
The alternative definitions, appearing in \cite{FuKa, GrafPorta}, can be shown to be equivalent.  

Since time-reversal symmetry relates the point $k$ and $-k$ in $\R^2$, a $\tau$-equivariant and time-reversal symmetric Bloch frame is completely determined by the values it attains on the \emph{effective unit cell}
\[ \Bred := \set{k = \sum_{j=1}^{2} k_j b_j \in \B: k_1 \ge 0}.  \]
The general strategy is to pick any smooth Bloch frame $\Psi$ on $\Bred$ (whose existence is guaranteed by the fact that $\Bred$ is contractible) and try to symmetrize it in order to impose $\tau$-equivariance and time-reversal symmetry, thus obtaining a smooth symmetric frame $\Phi$ on $\Bred$. The extension of $\Phi$ to the whole $\R^d$ is then obtained by imposing the relevant symmetries.

Inside the effective unit cell, a special r\^ole is played by points which are fixed by the composition of a lattice translation $t_\lambda(k) = k+\lambda$, $\lambda \in \Lambda$, and the inversion $c(k) = -k$: such points $k_\lambda$ are then given by $k_\lambda = \lambda/2$. The symmetries that we want to impose on $\Phi$ influence the possible values that it can attain at these points $k_\lambda$, as well as at the edges that connect them and constitute the boundary $\partial \Bred$. In \cite{FiMoPa2}, it is shown that, starting from $\Psi(k)$, it is always possible to construct a unitary-matrix-valued map $\widehat{U} \colon \partial \Bred \to \U(\C^m)$ which is smooth and such that
\[ \widehat{\phi}_b(k) := \sum_{a=1}^{m} \psi_a(k) \, \widehat{U}_{ab}(k), \quad k \in \partial \Bred, \]
defines, on the boundary $\partial \Bred$, a smooth Bloch frame $\widehat{\Phi}(k)$ for $\PB$, which indeed satisfies all the relevant symmetries on $\partial \Bred$. Moreover, the definition of the frame $\widehat{\Phi}$ can be smoothly extended to the interior of $\Bred$ if and only if the map $\det \widehat{U} \colon \partial \Bred \approx S^1 \to U(1) \approx S^1$ has even winding number, \ie
\[ \deg([ \det \widehat{U}]) := \frac{1}{2\pi \iu} \oint_{\partial \Bred} \di z \, \partial_z \log \det \widehat{U}(z) \equiv 0 \mod 2 \]
(see \cite[Thm.\ 4]{FiMoPa2}). The topological obstruction to the existence of a smooth Bloch frame for $\PB$ which is both $\tau$-equivariant and time-reversal symmetric is thus encoded in the invariant
$$ 
\delta(\PB) := \deg([ \det \widehat{U}]) \bmod 2.
$$
Finally, one shows that the value of the index $\delta(\PB)$ does not depend on the choice of the input frame $\Psi$, 
nor on the intermediate choices which are needed to construct $\widehat U$. Moreover, $\delta(\PB)$ defines a topological invariant of the family of projectors $\PB$. 

\bigskip

The general paradigm that one can extrapolate from this result is that the addition of symmetries (in this case, a fermionic time-reversal symmetry) refines the geometric structure, and leads to the emergence of new interesting topological invariants which label the phases of quantum matter (compare \cite{Kitaev,RyuSchnyder2010}). 
A point that should be emphasized is that the presence of symmetries cannot be always implemented by taking suitable quotients. Indeed, in the case of the $\Z^d$-symmetry, one can proceed by taking the quotient of the crystal momentum space $\R^d$ by the group of lattice translation $\Gamma^*$, thus obtaining the (Brillouin) torus $\T^d_*$, because the corresponding action of $\Gamma^*$ has no fixed points. The same quotient procedure can be also performed in the fibre Hilbert space $\Hf$, leading to the Bloch bundle, as was detailed above. However, one cannot take the subsequent quotient of the torus $\T^d_*$ by the inversion symmetry $c(k) = -k$, because the corresponding $\Z_2$-action this time \emph{has} fixed points, namely the points $k_\lambda = \lambda/2$. This would lead to a singular quotient $\T^d_* / \Z_2$, which could be described in terms of the $C^*$-algebraic methods of non-commutative geometry. In our approach, we prefer instead to impose the symmetries at the level of (global) Bloch frames, and study the existence of symmetric frames from the point of view of obstruction theory and differential geometry. In this way, putting on the special glasses of symmetries, we become able to detect finer details of the system under scrutiny. Granting us this power, Weyl's angel of topology has won its fight.



{\footnotesize  

\vspace{15 mm}

\begin{tabular}{ll}
(D. Monaco) &  \textsc{SISSA -- International School for Advanced Studies}\\
&Via Bonomea 265, 34136 Trieste, Italy \\
&{E-mail address}: \href{mailto:dmonaco@sissa.it}{\texttt{dmonaco@sissa.it}} \\
\\
(G. Panati) & \textsc{Dipartimento di Matematica, \virg{La Sapienza} Universit\`{a} di Roma} \\
 &  Piazzale Aldo Moro 2, 00185 Rome, Italy \\
 &  {E-mail addresses}: \href{mailto:panati@mat.uniroma1.it}{\texttt{panati@mat.uniroma1.it}}, 
 \href{mailto:panati@sissa.it}{\texttt{panati@sissa.it}}  \\

\end{tabular}

\bigskip

}
\end{document}